\documentclass[11pt]{article}
\usepackage[utf8]{inputenc}
\usepackage[a4paper,left = 1in,right = 1in, top = 1.5in, bottom = 1.2in]{geometry}
\linespread{1.2}
\usepackage{amsmath}
\usepackage{amsthm}
\usepackage{palatino}
\usepackage{bm}
\usepackage{dsfont}
\usepackage{cancel}
\usepackage{amssymb}
\usepackage{commath}
\usepackage{mathtools}
\usepackage{float}
\usepackage{accents}
\usepackage{algorithm}
\counterwithin{equation}{section}
\counterwithin{algorithm}{section}

\usepackage[numbers]{natbib}
\bibliographystyle{abbrv}

\usepackage{subcaption}
\usepackage{graphicx}
\usepackage{color}
\usepackage{tcolorbox}
\newtcolorbox{mybox}[3][]
{
colframe = black,
  colback  = black,
  coltitle = white,  
  coltext = white,
  title    = {#3},
  #1,
}

\usepackage{xcolor}
\usepackage[colorlinks,linkcolor=blue,citecolor=blue,urlcolor = violet]{hyperref}
\usepackage[capitalise]{cleveref}

\def\be{\begin{equation}}
\def\ee{\end{equation}}

\def\bea{\begin{align}}
\def\eea{\end{align}}

\def\bea*{\begin{align*}}
\def\eea*{\end{align*}}
\theoremstyle{plain}
\newtheorem{theorem}{Theorem}[section]

\newtheorem{lemma}[theorem]{Lemma}

\newtheorem{proposition}[theorem]{Proposition}

\newtheorem{definition}[theorem]{Definition}
\newtheorem{example}[theorem]{Example}
\newtheorem{remark}[theorem]{Remark}


\DeclareMathOperator{\E}{\mathbb{E}}
\DeclareMathOperator{\F}{\mathbb{F}}

\DeclareMathOperator{\N}{\mathbb{N}}

\DeclareMathOperator{\Q}{\mathbb{Q}}
\DeclareMathOperator{\R}{\mathbb{R}}

\DeclareMathOperator{\Z}{\mathbb{Z}}

\DeclareMathOperator{\calC}{\mathcal{C}}

\DeclareMathOperator{\calE}{\mathcal{E}}
\DeclareMathOperator{\calF}{\mathcal{F}}

\DeclareMathOperator{\calH}{\mathcal{H}}

\DeclareMathOperator{\calM}{\mathcal{M}}
\DeclareMathOperator{\calN}{\mathcal{N}}
\DeclareMathOperator{\calO}{\mathcal{O}}

\DeclareMathOperator{\calR}{\mathcal{R}}
\DeclareMathOperator{\calS}{\mathcal{S}}
\DeclareMathOperator{\calT}{\mathcal{T}}


\DeclareMathOperator{\frakF}{\mathfrak{F}}

\providecommand{\keywords}[1]
{
  \small	
  \textbf{\textit{Keywords ---}} \textit{#1}
}

\usepackage{authblk}
\makeatletter
\renewcommand\AB@affilsepx{; \protect\Affilfont}
\makeatother

\renewcommand\Affilfont{\small}

\usepackage[symbol]{footmisc}
\renewcommand{\thefootnote}{\fnsymbol{footnote}}

\usepackage{tikz}
\usetikzlibrary{matrix}

\usepackage{fancyhdr}
\pagestyle{fancy}

\fancyhf{}

\fancyhead[L]{\scriptsize{Projection of Functionals}}
\fancyhead[R]{\scriptsize V. Tissot-Daguette}

\fancyhead[L]{\scriptsize PROJECTION OF FUNCTIONALS}

\fancyhead[R]{\scriptsize V. TISSOT-DAGUETTE}

\fancyfoot[C]{\scriptsize\thepage}

\fancypagestyle{firststyle}
{
   \fancyhf{}
   
   \fancyfoot[C]{\scriptsize\thepage}
}

\setlength{\footskip}{40pt}

\date{\vspace{-1em}\normalsize{\today}}

\begin{document}

\thispagestyle{firststyle}
\begin{center}
    \Large \textbf{Projection of  
    Functionals and Fast Pricing 
    of Exotic Options
    \footnote[1]{The research has been carried out during Valentin's
    internship 
    at Bloomberg LP. The author wishes to thank  Bruno Dupire for his precious guidance, goodwill and inspiring discussions.  \vspace{2mm}}} \\
    \vspace{0.5cm}
    \normalsize  Valentin Tissot-Daguette\footnote[2]{Operations Research and Financial Engineering   Department, Princeton University, Princeton, NJ 08544, USA. Email: \texttt{v.tissot-daguette@princeton.edu}\vspace{2mm}}\\
\vspace{.1cm}
\end{center}

\vspace{-2mm}
\begin{abstract}
  We investigate the approximation of path  functionals. 
     In particular, we advocate the use of  the Karhunen-Loève expansion, the continuous analogue of Principal Component Analysis, to  extract relevant information from the image of a functional. 
Having accurate estimate of functionals is of paramount importance in the context of exotic derivatives pricing, as presented in the practical applications.
Specifically, 
we show how a simulation-based procedure, 
which we call the Karhunen-Loève Monte Carlo (KLMC) algorithm, 
allows fast and  efficient computation of the price of path-dependent options. We also  explore the path signature as an alternative tool to project both paths and functionals.
\end{abstract}

\keywords{Path functional,  Karhunen-Loève expansion, signature, derivatives pricing}
\\

\textit{\textbf{MSC (2020) Classification —} 91G20, 91G60, 41A45}  


\renewcommand{\thefootnote}{\arabic{footnote}}
\section{Introduction}

The pricing of exotic 
options remains a difficult task in quantitative finance. The main challenge is to find an adequate trade-off between pricing accuracy and fast computation. Efficient techniques such as finite difference \cite{Schwartz} or the fast Fourier transform \cite{Carr} are in general not applicable to path-dependent payoffs. Practitioners are often forced to turn to  standard Monte Carlo methods, despite being  slow. 
Therefore, researchers have come up with novel ideas over the years to tackle this issue.   
Recently, some authors have employed deep learning to price vanilla and exotic options in a non-parametric manner \cite{Hull} while others
showed the benefits of the path signature  \cite{Szpruch,LyonsNum} to project exotic payoffs in a nonlinear way. 

In this paper, we move away from  prevailing machine learning methods and bring a classical tool  back into play: the Karhunen-Loève (KL) expansion \cite{Karhunen, Loeve}. The latter provides an  orthogonal decomposition of stochastic processes that is optimal in the $L^2(\Q \otimes dt)$ sense. The theory has been  applied to Gaussian processes \cite{Ghanem,Solin},  functional quantization \cite{Pages},  and  recently to the Brownian Bridge in a weighted Hilbert space \cite{Foster}. 
In this paper, the KL expansion takes on a newfound importance when it is applied to the projection of path functionals. We propose a simple simulation-based procedure, which we call the Karhunen-Loève Monte Carlo (KLMC) algorithm, to  compute the price  of exotic options.\footnote{See \href{https://github.com/valentintissot/KLMC.git}{https://github.com/valentintissot/KLMC} for an implementation.}
The superiority of KL-based Monte Carlo methods compared to the ordinary one was shown numerically in \cite{Acworth} for  the Brownian case. Our goal is here to further support these findings and 
extend this approach. We also discuss alternative methods employing the path signature as basis of the space of functionals. 




The remainder of this paper is structured as follows. In \Cref{sec:pathApprox}, we gather standard results from approximation theory and draw a 
parallel between Hilbert projection and the à la mode path signature. 
\Cref{sec:funcApprox} is devoted to the approximation of functionals, where two routes are contrasted as well as a short discussion on the use of the signature for this task. 
We  apply the developed tools and finally present the KLMC algorithm with accompanying numerical evidence in \Cref{sec:application}.
 \section{Path Approximation}
\label{sec:pathApprox}

For fixed horizon $T>0$, let $\Lambda_t = \calC([0,t],\R)$ and $\Lambda := \bigcup_{t\in[0,T]}\Lambda_t$.   For $X \in \Lambda_t$ and $s \le t$, $X_s$ denotes the  trajectory up to time $s$, while $x_s= X(s)$ is the value at time $s$. 
We equip $\Lambda$ with a $\sigma-$algebra $\calF$, filtration $\F$ and probability measure $\Q$ to form a stochastic basis $(\Lambda,\calF,\F, \Q)$. 

The goal of this paper is to price exotic options with payoff of the form $\varphi = h\circ f$, where  $f:\Lambda \to \R$ is a  \textit{functional} and  $h$ a real function. 
 For instance, an Asian call option is obtained with $f(X_t) = \frac{1}{t}\int_{0}^t x_s ds$ and $h(y) = (y-K)^+$. If $\Q$ is a risk-neutral measure and assuming zero interest rate, then 
 $$p = \E^{\Q}[\varphi(X_\tau)],$$ is the value of the  option with payoff $\varphi$ and maturity $\tau \in [0,T]$.  
To compute $p$ using Monte Carlo, we  typically simulate an approximated version of $X \in \Lambda_\tau$, e.g. using time discretization. Alternatively, we can approximate the  \textit{transformed path},
$$Y = f(X), \quad y_t = f(X_t), \quad  t\in [0,\tau],$$ and  write $p = \E^{\Q}[h(y_\tau)]$. We favor the second option, as shown throughout the paper and in the numerical experiments. 
 
 A natural way to approximate  $X$ (or $Y$) is to project it onto a Hilbert space. For simplicity, we focus on paths defined on the whole interval $[0,T]$, so working on $\Lambda_T$ is enough. Also, $f$ is assumed to preserve continuity, so $f(X)\in \Lambda_T$ as well.  We now  present the theory  for the original path, although the same would hold for the transformed one; see \cref{sec:funcApprox}.  
Let $\calH $ be a separable Hilbert space with inner product $(\cdot,\cdot)_{\calH}$. Then any  $X \in  \Lambda_T \cap \,  \calH $ admits the representation
\begin{equation}\label{eq:proj}
    x_t = \sum_{k} \xi_k F_k(t), \quad \xi_k = (X,F_k)_{\calH}, \quad t\in [0,T], 
\end{equation}
\vspace{-4mm}

where $\mathfrak{F} := (F_k)$ is an orthonormal basis (ONB) of $\calH$.\footnote{
The enumeration of $\mathfrak{F}$ will depend on its construction and
common notations. For instance, $\mathfrak{F}$ may or may not include an initial element $F_0$. For fairness sake, however, we always compare projections involving the same number of basis functions.} An  approximation of $X$ is obtained by truncating the  series in $\eqref{eq:proj}$, that is
$x^{K,\frakF}_t = \sum_{k\, \le\, K} \xi_k F_k(t).$ 
Each pair $(K,\frakF)$ thus induces a 
 projection map $\pi^{K,\frakF}:\calH \to \calH$ given by $\pi^{K,\frakF}(X)  = X^{K,\frakF}$. Although paths are assumed to be  one-dimensional for simplicity, the present framework is  easily generalized. Indeed, if $x_t = (x^1_t,\ldots,x^d_t) \in \R^d$,  it suffices to project each component separately, i.e. $x_t^{i,K,\frakF^i} =\sum_{k\, \le\, K} \xi^i_k F^i_k(t)$ with $\xi^i_k = (X^i,F^i_k)_{\calH}$ and $(\frakF^i)_{i=1}^{d}$ ONB's of $\calH$.
 
\subsection{Karhunen-Loève Expansion}\label{ssec:KL}
Let $\calH$ be the Lebesgue space $L^2([0,T])$ of square-integrable functions, where  we write 
$(\cdot,\cdot) = (\cdot,\cdot)_{L^2([0,T])} $ for brevity. 
Among the myriad of bases available, which one should be picked? The answer will depend upon the optimality criterion. One possibility is to minimize the square of the $L^2(\Q \otimes \, dt)-$norm (denoted by $\lVert \cdot \rVert_{*}$) between a path and its $K-$order  truncation, i.e. 
$$\epsilon^{K,\frakF} := \lVert X - X^{K,\frakF}\rVert^2_{*} = \E^{\Q} \int_0^T |
x_t - x^{K,\frakF}_t|^2 dt,$$
for $X \in \Lambda_T \cap L^2([0,T])$. 
Thanks to the orthogonality of $\frakF$,  we have 
\begin{equation}\label{eq:err}
   \epsilon^{K,\frakF} = \sum_{k,l \, > \, K}(\xi_k,\xi_l)_{L^2(\mathbb{Q})} \;(F_k,F_l)_{L^2([0,T])}  = \sum_{k \, > \, K} \lambda^{\frakF}_k , \quad \lambda^{\frakF}_k :=  \lVert  \xi_k\rVert^2_{L^2(\Q)}, 
\end{equation}
where it is assumed that $\lambda_k^{\frakF} \ge \lambda_l^{\frakF}\; \; \forall \ k < l $ without loss of generality. 
As  the mapping 
$\frakF \mapsto \sum_{k} \lambda^{\frakF}_k  $ is  constant and equal to the total variance $\lVert X\rVert^2_{L^2(\Q \otimes \, dt)}$, the projection error is solely determined by the speed of decay of  $(\lambda^{\frakF }_k)$. Inversely, the optimal basis will maximize the cumulative sum of  variance $\sum_{k \, \le \, K} \lambda^{\frakF}_k$.
This leads us to the \textit{Karhunen-Loève expansion} \cite{Karhunen,Loeve}, the continuous analogue of Principal Component Analysis. 
In what follows, assume  $\E^{\Q}[x_t]=0$ $\forall \, t \in [0,T]$ and define the covariance kernel $\kappa_X(s,t) = (x_s, x_t)_{L^2(\Q)}.$ 
As $\kappa_X$ is symmetric,  continuous and non-negative definite, Mercer's representation theorem \cite{Mercer} ensures the existence of  an ONB $\frakF=(F_k)$ of $L^2([0,T])$ and scalars $\lambda_1^{\frakF} \ge \lambda_2^{\frakF} \ge \ldots \ge 0$ such that 
\begin{equation}\label{eq:mercer}
    \kappa_X(s,t)= \sum_{k=1}^{\infty} \lambda^{\frakF}_k F_k(s) F_k(t).
\end{equation}
Then $\frakF$ is the \textit{Karhunen-Loève (KL) basis} associated to $X$ under $\Q$. 
From $\eqref{eq:mercer}$, it is immediate  that   $F_k$ solves the Fredholm integral equation
 $$(\kappa_X(t,\cdot),F_k) = \lambda_k^{\frakF}\,  F_k(t), \quad  t\in [0,T].$$ Accordingly, 
 $\frakF$ and $(\lambda_k^{\frakF})$ are  termed \textit{eigenfunctions} and \textit{eigenvalues} of $\kappa_X$, respectively.  
Observe that the squared $L^2(\Q)$ norm of the KL coefficient $\xi_k=(X,F_k)$ is precisely $\lambda_k^{\frakF}$, whence comes the notation in $\eqref{eq:err}$. 
The next result reflects the relevance of the KL expansion; see \cite[Theorem 2.1.2.]{Ghanem} for a proof.

\begin{theorem}
\label{thm:KL}
The Karhunen-Loève basis
is the unique ONB of $L^2([0,T])$ minimizing $\epsilon^{K,\frakF}$  for every truncation level $K\ge 1$. 
\end{theorem}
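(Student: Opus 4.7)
The plan is to recast $\epsilon^{K,\frakF}$ as a trace-type quantity and then invoke the Rayleigh--Ritz / Ky Fan variational principle for the covariance integral operator. Introduce $\calK:L^2([0,T]) \to L^2([0,T])$ defined by $(\calK g)(t) = \int_0^T \kappa_X(t,s) g(s)\,ds$; by the stated properties of $\kappa_X$ it is compact, self-adjoint and positive. A short Fubini computation shows that for any $F \in L^2([0,T])$ with $\xi = (X,F)$,
\begin{equation*}
\lVert \xi \rVert_{L^2(\Q)}^2 = \int_0^T\!\!\int_0^T \kappa_X(s,t) F(s) F(t)\,ds\,dt = (\calK F, F),
\end{equation*}
so in particular $\lambda_k^{\frakF} = (\calK F_k, F_k)$. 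Mercer's expansion $\eqref{eq:mercer}$ then identifies the eigenvalues of $\calK$ with the KL sequence $(\lambda_k)$ and its eigenfunctions with the KL basis.

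Next I would observe that the total sum $\sum_k \lambda_k^{\frakF} = \sum_k (\calK F_k, F_k) = \mathrm{tr}(\calK) = \int_0^T \kappa_X(t,t)\,dt$ is independent of the chosen ONB (this is the standard basis-invariance of the trace, which here follows either from Mercer or directly by Parseval). Hence minimising $\epsilon^{K,\frakF} = \sum_{k>K}\lambda_k^{\frakF}$ is equivalent to maximising the partial sum $S_K(\frakF) := \sum_{k\le K} (\calK F_k, F_k)$.

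The core step is the Ky Fan maximum principle for compact positive operators, which states that over all orthonormal $K$-tuples $(F_1,\ldots,F_K)$ in $L^2([0,T])$,
\begin{equation*}
\max \; \sum_{k=1}^K (\calK F_k, F_k) \;=\; \sum_{k=1}^K \lambda_k,
\end{equation*}
with the maximum attained precisely when $\mathrm{span}(F_1,\ldots,F_K)$ equals the top-$K$ invariant subspace of $\calK$. Applied simultaneously for each $K\ge 1$, this forces any optimiser to satisfy $\mathrm{span}(F_1,\ldots,F_K) = \mathrm{span}(\text{top } K \text{ eigenfunctions})$ for every $K$, hence (up to sign choices within each one-dimensional eigenspace, and unitary rotation within repeated eigenspaces) to coincide with the KL basis of $X$. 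Uniqueness is then obtained by induction on $K$: the base case $K=1$ is the ordinary Rayleigh quotient, and at step $K$ the constraint $F_K \perp F_1,\ldots,F_{K-1}$ combined with the optimality of $S_K$ pins down $F_K$ as the $K$-th eigenfunction.

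The main obstacle is the careful justification of the Ky Fan principle in the infinite-dimensional, trace-class setting, together with the nuance that genuine pointwise uniqueness of $\frakF$ requires simplicity of the spectrum; in the presence of multiplicities one must interpret ``unique'' modulo orthogonal rotation within each eigenspace. Everything else is either Fubini, Parseval, or bookkeeping.
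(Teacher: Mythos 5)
Your argument is correct. Note, however, that the paper does not prove \cref{thm:KL} in the text at all: it delegates the proof to Theorem 2.1.2 of \cite{Ghanem}, where the classical route is variational --- one minimises the truncation error over normalised $F_k$ using Lagrange multipliers for the orthonormality constraints, and the stationarity condition is exactly the Fredholm equation $(\kappa_X(t,\cdot),F_k)=\lambda_k F_k(t)$, identifying the optimisers with the Mercer eigenfunctions. Your route packages the same optimisation as an operator statement: after the Fubini/Parseval step identifying $\lambda_k^{\frakF}=(\calK F_k,F_k)$ and the basis-invariance of $\operatorname{tr}\calK$ (which is the paper's own observation that $\sum_k\lambda_k^{\frakF}$ is constant), you invoke the Ky Fan maximum principle for the compact positive operator $\calK$ instead of deriving the Euler--Lagrange equations. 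This buys a cleaner treatment of the ``for every $K\ge 1$'' clause --- the induction showing that simultaneous optimality at all truncation levels pins down each $F_k$ individually, not just the spans --- at the cost of having to justify Ky Fan (and its attainment characterisation) in the infinite-dimensional trace-class setting, which is standard but which you correctly flag as the real work. Your caveat about uniqueness is also apt and slightly sharper than the statement itself: as written, ``the unique ONB'' only holds up to sign flips and, when eigenvalues repeat, orthogonal rotations within the corresponding eigenspaces; neither the paper nor the cited reference escapes this qualification, so your reading of ``unique'' modulo these symmetries is the right one.
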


\begin{remark}\label{rem:center}
For non-centered trajectories,  it suffices to characterize the Karhunen-Loève basis of $x_t-\E^{\Q}[x_t]$. The projected path is then obtained by  adding the mean function back to the expansion. 
\end{remark}

\begin{example} \label{ex:KLBM} Let $T=1$ and $\Q$ be the Wiener measure, i.e. the coordinate process $X$ is Brownian motion on $[0,1]$. 
The covariance kernel is  $\kappa_X(s,t) = s \wedge t$, leading  to the eigenfunctions $F_k(t) = \sqrt{2} \sin((k-1/2)\pi t)$ and eigenvalues $\lambda_k^{\frakF} = \frac{1}{\pi^2(k-1/2)^2}$, $k \ge 1.$  
The projection error is approximately equal to 
\begin{align*}
    \epsilon^{K,\frakF}
    = \frac{1}{\pi^2}\sum_{k\,>\,K} \frac{1}{(k-1/2)^2}
    \approx \frac{1}{\pi^2} \int_K^{\infty} \frac{dk}{(k-1/2)^2} = 
    \frac{1}{\pi^2(K-1/2)}.
\end{align*}
It is easily seen that $\xi_k = (X,F_k) \sim \calN(0,\lambda_k^{\frakF})$ so that $\xi_k, \xi_l$ are independent for $k\ne l$.  Therefore, "smooth" Brownian motions can be simulated  by setting
$x^{K,\frakF}_t = \sum_{k =1}^K \sqrt{\lambda_k^{\frakF}}\, z_k \, F_k(t)$ with  $(z_k)_{k=1}^K \overset{\textnormal{i.i.d.}}{\sim} \calN(0,1)$, $K\ge 1.$
\end{example}

\subsection{Lévy-Cieselski Construction}\label{ssec:LC}
Another Hilbert space  is the \textit{Cameron–Martin space}, 
$\calR = \{F \in \Lambda_T \, | \, dF \ll dt, \, \dot{F} \in L^2([0,T]) \} $ 
where $\dot{F}$ denotes the time derivative of $F$. The inner product is  $(F,G)_{\mathcal{R}} = (\dot{F}, \dot{G})$, from which 
$(F_k)$  is an ONB of  $\calR \; \Longleftrightarrow \; (\dot{F}_k)$  is an ONB of  $L^2([0,T])$
is immediate.
If $X^{K,\frakF}$ is a projected path with respect to an ONB $\frakF$ of $\calR$, then taking derivative gives
$$\dot{x}_t^{K,\frakF} = \sum_{k \, \le \, K} (\dot{X},\dot{F}_k)\dot{F}_k(t) = \sum_{k \, \le \, K} (X,F_k)_{\calR}\, \dot{F}_k(t). $$
 We gather that the projection of a path onto $\calR$ corresponds to an  $L^2([0,T])$ projection of its (possibly generalized)  derivative 
 followed by a time integration.
When $\Q$ is the Wiener measure, this procedure  is often called the \textit{Lévy-Cieselski construction}.  
With regards to accuracy, we recall the expression for the  $L^2(\Q \otimes \, dt)-$error,
$$\epsilon^{K,\frakF} = \lVert X - X^{K,\frakF}\rVert^2_{*} = \sum_{k,l \, > \, K}(\xi_k,\xi_l)_{L^2(\mathbb{Q})} \;(F_k,F_l)_{L^2([0,T])}.\vspace{-2mm}$$
As orthogonal functions in $\calR$ need not be orthogonal in $L^2([0,T])$, we cannot in general get rid of the double sum above. 
However, if $\Q$ is the Wiener measure and $\dot{X}$ the white noise process, then Fubini's theorem gives 
$$(\xi_k,\xi_l)_{L^2(\mathbb{Q})} = \int_{[0,T]^2}  \underbrace{\E^{\Q}[\dot{x}_s \dot{x}_t]}_{=\, \delta(t-s)}\dot{F}_k(s) \dot{F}_l(t) ds dt = \int_0^1 \dot{F}_k(t) \dot{F}_l(t) dt = (F_k,F_l)_{\calR}= \delta_{kl} .$$
Thus, 
$\epsilon^{K,\frakF} = \sum_{k\, >\, K} \lVert F_k \rVert^2 $.
The optimal Cameron-Martin basis would therefore have the fastest decay of its squared norms $(\lVert F_k \rVert^2)$, assuming the latter are sorted in non-increasing order.
We illustrate the Lévy-Cieselski construction with two examples, taking $T=1$.

\begin{example}\label{ex:BBC}

A standard method to prove the existence of Brownian motion follows from the \text{Brownian bridge construction}. In short, it consists of a random superposition of triangular functions$-$the Schauder functions$-$obtained by integrating the Haar basis on $[0,1]$,
$$\dot{F}_{k,l}(t)=2^{k/2} \, \psi \left(2^{k}t-l\right),\quad 0 \le l < 2^k,\quad t\in [0,1],$$
with the wavelet $\psi= (-1)^{ \mathds{1}_{[1/2, 1)}}$,  $\textnormal{supp}(\psi) = [0,1]$. The Schauder and Haar functions are illustrated on the left side of  \Cref{fig:CMS}.  It is easily seen that $\dot{F}_{k,l}$ as well as $F_{k,l}$ have support  
$[l/2^k,(l+1)/2^k]$, the $l-$th subinterval of the dyadic partition $\Pi_{k} = \{l/2^k\,|\, 0 \le l \le  2^k\}$. 
The construction  is  incremental:   
First, the terminal value of the path is simulated. Then, for each subinterval of $\Pi_k$, $k\ge 0$,  a random value for the mid-point  is generated  and thereafter  connected  to the endpoints in a linear fashion  (using  $F_{k+1,\cdot}$).  
The restriction of $X$ to $\Pi_k$ will therefore remain the same when  finer characteristics of the path are added.

When considering all functions up to the $\bar{K}-$th dyadic partition, the total number of basis functions employed  is $K = 
2^{\bar{K}+1}-1$.   
For Brownian motion, the approximation error is known \cite{Brown} and equal to  
$\epsilon^{K,\frakF} = \frac{1}{6K}.$

\end{example}

\begin{example}\label{ex:cosine}
Let $(\dot{F})$ be the \text{cosine Fourier ONB}, i.e.
$\dot{F}_k(t) = \sqrt{2}\cos(\pi k t)$, $t \in [0,1]$. The anti-derivatives $F_k(t) = \sqrt{2}\,\frac{\sin(\pi k t)}{\pi k}$ turns out to correspond$-$up to a factor$-$to the Karhunen-Lo\`eve basis of the \text{Brownian bridge}. Indeed, recalling that $\kappa_X(s,t) = s\wedge t - st$ if $X$ is a Brownian bridge, we have for the ONB $\, \tilde{\frakF} = (\tilde{F}_k) = (\pi k \, F_k)$, 
\begin{align*}
    (\kappa_X(\cdot,t),\tilde{F}_k) 
    = \sqrt{2}\left[(1-t)\int_0^t s\, \sin(\pi k s) ds + t \int_t^1 (1-s)\sin(\pi k s) ds\right]
    = \sqrt{2}\, \frac{\sin(\pi k t)}{\pi^2 k^2},
\end{align*}
using integration by parts in the last equality. 
The eigenvalues  are therefore $(\lambda_k^{\tilde{\frakF}}) = (\frac{1}{\pi^2 k^2})$. 
The first elements of $\tilde{\frakF}$ and the Fourier cosine ONB are displayed on the right charts of  \Cref{fig:CMS}.
Following the same argument as in  \cref{ex:KLBM}, the (minimal) projection error onto $K$ basis functions is approximately equal to $\frac{1}{\pi^2 K}$. This is less than Brownian motion (see \Cref{ex:KLBM}) as little more is known about a Brownian bridge; $\Q-$almost all trajectories return to the origin.

\end{example}

 \begin{figure}[H]
    \centering
     \caption{Basis functions and derivatives in the Cameron-Martin space.}
     \vspace{-2mm}
    \includegraphics[scale = 0.33]{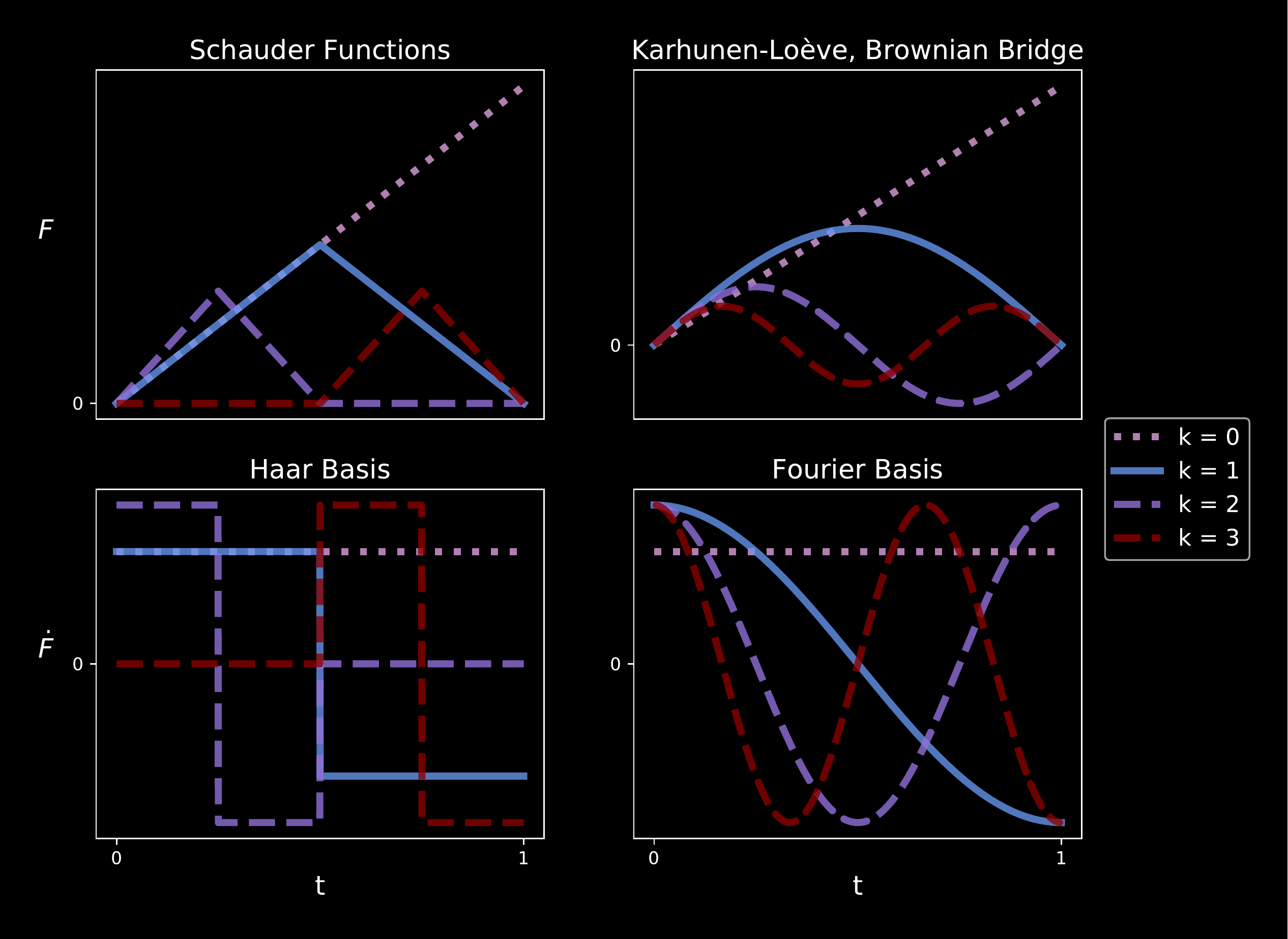}
    \label{fig:CMS}
\end{figure}

\subsection{Signature and Legendre Polynomials}\label{sec:sigLegendre}
An alternative characterization of a path is available through the so-called \textit{signature}  \cite{Lyons}.  Roughly speaking, the signature extract  from a path an infinite-dimensional skeleton, where  each "bone" contains inherent information. 
We start off with a few definitions. 
A \textit{word} is a sequence  $\alpha = \alpha_1 ... \alpha_k$ of letters from the alphabet $\{0,1\}$. The length of $\alpha$ is denoted  by $l(\alpha)$. 
Moreover, we augment a path $X \in \Lambda$ with the time itself $t \mapsto t$  and   write 
$x^0_{t} = t$, $x^1_{t} = x_t$.  The words $0,1$ are therefore identified with the time $t$ and path $x$, respectively. 
\begin{definition}
The \textit{signature}  is   a collection  of functionals $\calS= \{\calS_{\alpha}: \Lambda \to \R\}$ given by 
\begin{equation}\label{eq:sig}
   \calS_{\emptyset} \equiv 1, \qquad  \calS_{\alpha}(X_t) =
\int_{0}^{t} \int_{0}^{t_k} \cdots \int_{0}^{t_2} \circ \, dx^{\alpha_1}_{t_1} \cdots \circ dx^{\alpha_k}_{t_k}, \quad l(\alpha)=k \ge 1,
\end{equation}
     where  $\circ$ indicates Stratonovich integration.\footnote{The integrals in  $\eqref{eq:sig}$ are in the Lebesgue-Stieltjes  (respectively It\^o) sense when  the integrator (respectively integrand) is of bounded variation. In both cases, the symbol $\circ$ can be omitted.\\[-0.5em]}
\end{definition} 
When referring to a specific path $X$, we shall call the sequence $(\calS_{\alpha}(X))$  the \textit{signature of $X$}. This is usually how the signature is defined; see  \cite{Lyons}. 
 If $x_t\in \R^d$ with $d\ge 2$, then the alphabet becomes $\{0,1,\ldots,d\}$ and  the signature is  obtained  analogously.  
The first signature functionals read
\begin{align*}
    \calS_{0}(X_t) &= \int_0^t d t_1 = t, &&\calS_{1}(X_t) = \int_0^t \circ \,d x_{t_1} = x_t - x_0,\\
  \calS_{00}(X_t) &= \int_0^t \int_0^{t_2} d t_1 d t_2 =\frac{t^2}{2}, &&\calS_{11}(X_t) = \int_0^t \int_0^{t_2}\circ \, d x_{t_1} \circ d x_{t_2}=\frac{(x_t-x_0)^2}{2},\\
  \calS_{10}(X_t) &= \int_0^t \int_0^{t_2} dx_{t_1}  d t_2 =\int_0^t (x_{s} - x_0) ds, &&\calS_{01}(X_t) = \int_0^t \int_0^{t_2} d t_1 d x_{t_2} =\int_0^t s \,dx_s.
\end{align*}
Keeping track of the passage of time  is crucial, as the signature would otherwise barely carry  information about the path. Indeed, notice that
$\calS_{\alpha}(X_t) = \frac{(x_t-x_0)^{k}}{k!}$ for  $\alpha =  1...1$, $l(\alpha)=k$ 
(as seen above for $k=1,2$) thus only the increment $x_t-x_0$ is known with the alphabet $\{1\}$. 
 
A property of the signature is that it uniquely characterizes a path, up to a  equivalence relation: two paths having same signature differ at most by a \textit{tree-like path} \cite{Hambly}, a specific type of loop.  
Hence, extending a path with time 
not only enriches the  signature  but also
ensures  injectivity  as $t\to x^0_t =t$ is increasing. 
This gives hope to reconstruct the (unique) path associated to a  signature sequence. This was investigated 
 in \cite{Geng}, where the author shows a 
geometric reconstruction using polygonal approximations for Brownian paths. 
We propose a simple algorithm, in connection with our discussion on Hilbert projections.\footnote{I thank Bruno Dupire for suggesting this interesting parallel.} 
For ease of presentation, assume $x_0=0$ and $T=1$. We first make the following observation. 

\begin{lemma}\label{lem:Legendre}
Let $\overleftarrow{X}$ denote the \textit{time reversed path}, i.e. $\overleftarrow{\ x_t} = x_{1-t}$ and 
introduce the words $\alpha^{(k)} :=10\ldots0\,$, $l(\alpha^{(k)})=k+2$, $k \ge 0$. 
Then $\calS_{\alpha^{(k)}} (\overleftarrow{X}_{\! 1})  = \frac{1}{k!}(X, m_k) $ where $m_k(t)=t^k$.
\end{lemma}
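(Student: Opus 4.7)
The plan is to unroll the iterated integral defining $\calS_{\alpha^{(k)}}(\overleftarrow{X}_1)$, collapse the nested time integrations via Cauchy's formula for repeated integration, and conclude with a short change of variables.

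\textbf{Step 1 (unfold the signature).} I would first apply definition \eqref{eq:sig} to the word $\alpha^{(k)} = 1\,\underbrace{0\cdots 0}_{k+1}$. Since the innermost letter is $1$ (giving an integral against $d\overleftarrow{x}$) and the outer $k+1$ letters are $0$'s (giving integrals against $dt$), this yields
\[
\calS_{\alpha^{(k)}}(\overleftarrow{X}_1) \;=\; \int_0^1 dt_{k+2} \int_0^{t_{k+2}} dt_{k+1} \cdots \int_0^{t_3} dt_2 \int_0^{t_2} d\overleftarrow{x}_{t_1}.
\]
The innermost integral collapses to $\overleftarrow{x}_{t_2} - \overleftarrow{x}_0 = x_{1-t_2}$, using $\overleftarrow{x}_t = x_{1-t}$ together with the standing assumption $x_0=0$ (interpreted in the reversed-path picture so that $\overleftarrow{x}_0=0$).

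\textbf{Step 2 (Cauchy's formula for repeated integration).} Next I would collapse the remaining $k+1$ nested Lebesgue integrals. The cleanest route is Fubini: taking $s := t_2$ as the outermost variable leaves $t_3,\ldots,t_{k+2}$ to range freely over the $k$-dimensional simplex $\{s \le t_3 \le \cdots \le t_{k+2} \le 1\}$, whose volume is $\tfrac{(1-s)^k}{k!}$. This produces
\[
\calS_{\alpha^{(k)}}(\overleftarrow{X}_1) \;=\; \frac{1}{k!} \int_0^1 (1-s)^k\, x_{1-s}\, ds.
\]

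\textbf{Step 3 (change of variable).} The substitution $u = 1-s$ turns the right-hand side into $\frac{1}{k!}\int_0^1 u^k\, x_u\, du = \frac{1}{k!}(X, m_k)$, which is the stated identity.

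The only non-routine ingredient is the simplex-volume computation of Step 2, i.e.\ Cauchy's iterated-integration formula; Steps 1 and 3 are purely mechanical. The conceptual content of the lemma is thus the observation that iterating $k+1$ time integrations after a single path-increment integration convolves $x$ against the Riesz kernel $(1-s)^k/k!$, which under time reversal is precisely the monomial basis element $m_k$.
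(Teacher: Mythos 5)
Your proposal is correct and follows essentially the same route as the paper: the paper proves $\calS_{\alpha^{(k)}}(X_t)=\int_0^t x_s\frac{(t-s)^k}{k!}\,ds$ by induction on $k$ (whose inductive step is exactly your Cauchy/simplex-volume collapse, done in one shot via Fubini), then sets $t=1$ for the reversed path and changes variables as in your Step 3. The only wrinkle is in your Step 1: $x_0=0$ gives $\overleftarrow{x}_1=0$, not $\overleftarrow{x}_0=0$ (indeed $\overleftarrow{x}_0=x_1$), so strictly the increment-based signature produces an extra $-x_1/(k+1)!$ term; however, the paper's own proof makes the identical silent identification when it applies \eqref{eq:100} to $\overleftarrow{X}$, so your argument matches the paper's, wrinkle included.
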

\begin{proof} First, observe that 
\begin{equation}\label{eq:100}
    \calS_{\alpha^{(k)}}(X_{t}) = \int_0^{t} x_s \frac{(t -s)^{k}}{k!}ds,\quad \forall \, t \in [0,1]. 
\end{equation}
Indeed for fixed $t\in [0,1]$ and  $k=0$, then $\calS_{\alpha^{(0)}}(X_t)=\calS_{10}(X_t) = \int_0^t x_sds$, which is $\eqref{eq:100}$. Now by induction on $k\ge 1$, uniformly on $[0,t]$,
\begin{align*}
  \calS_{\alpha^{(k)}}(X_t) = \int_0^t \calS_{\alpha^{(k-1)}}(X_u)du 
    = \int_0^t \int_0^u x_s \frac{(u -s)^{k-1}}{(k-1)!}ds\, du 
    = \int_0^t x_s \frac{(t -s)^{k}}{k!} ds.
\end{align*}
Now taking $t=1$ and $\overleftarrow{X}$ instead of $X$, we get
$
    \calS_{\alpha^{(k)}}(\overleftarrow{X}_{\! 1}) = \int_0^1 \overleftarrow{\ x_t}\frac{(1 -t)^{k}}{k!}dt =  \frac{1}{k!}(x,m_k).
$
\end{proof}

Essentially,  \Cref{lem:Legendre} states that the knowledge of $(\calS_{\alpha^{(k)}}(\overleftarrow{X}_{\! 1}))_{k\ge 0}$ is equivalent to the knowledge of the inner products of the path with the monomials. Since  also 
\begin{align*}
    \calS_{\alpha^{(k)}}(\overleftarrow{X}_{\! 1})
    &= (-1)^{k}\int_0^1 x_t \frac{((1-t)-1)^{k}}{k!}dt 
    = \sum_{j=0}^{k} \calS_{\alpha^{(j)}}(X_1) \frac{(-1)^{j}}{(k-j)!},
\end{align*}
the coefficients $(X,m_k)_{k\ge 0}$ can be retrieved from $(\calS_{\alpha^{(k)}}(X_1))_{k\ge 0}$ as well. 
To fall within the context of orthonormal projection, we transform the monomials into the (unique) polynomial ONB of $L^2([0,1])$. 
Let $(p_k)$ be the Legendre polynomials \cite{Szego}, forming an orthogonal basis of $L^2([-1,1])$. Then  consider the \textit{shifted Legendre polynomials},  $q_k(t) = p_k(2t-1),$  $t\in [0,1].$ 
We write 
$$q_k(t)= \sum_{j\le k} a_{k,j}t^j, \quad a_{k,j} = (-1)^{k+j} {k \choose j} {k + j \choose j},$$ with coefficients  derived for instance from  \textit{Rodrigues' formula} \cite[Section 4.3]{Szego}. The standardization $F_k :=  \frac{q_k}{\lVert q_k\rVert} = \sqrt{2k+1}q_k$ makes $\frakF = (F_k)$ an ONB of $L^2([0,1])$. This leads us to the following result. 
\begin{proposition}
If  $b_{k,j} := \sqrt{2k+1} j!\, a_{k,j}$ and $G_j(t) := \sum_{k= j}^K b_{k,j}\,  F_k(t)$, then 
\begin{equation}\label{eq:sigLeg}
    x^{K,\frakF}_t 
    = \sum_{k\le K} \xi_k F_k(t)
    =  \sum_{j \le K} \calS_{\alpha^{(j)}}(\overleftarrow{X_1})  G_j(t).
\end{equation}
\end{proposition}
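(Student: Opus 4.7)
The plan is purely computational: substitute the explicit monomial expansion of the shifted Legendre polynomials into the projection coefficients $\xi_k = (X, F_k)$, translate each resulting monomial inner product into a signature term using the preceding lemma, and then rearrange the double sum.

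First I would expand $F_k$ via its definition. Since $F_k = \sqrt{2k+1}\, q_k$ with $q_k(t) = \sum_{j\le k} a_{k,j}\, t^j = \sum_{j\le k} a_{k,j}\, m_j(t)$, linearity of the $L^2([0,1])$ inner product gives
$$\xi_k \;=\; (X, F_k) \;=\; \sqrt{2k+1}\sum_{j\le k} a_{k,j}\, (X, m_j).$$
Next, I would invoke \Cref{lem:Legendre} to substitute $(X, m_j) = j!\, \calS_{\alpha^{(j)}}(\overleftarrow{X}_{\!1})$, which yields
$$\xi_k \;=\; \sum_{j\le k} \sqrt{2k+1}\, j!\, a_{k,j}\, \calS_{\alpha^{(j)}}(\overleftarrow{X}_{\!1}) \;=\; \sum_{j\le k} b_{k,j}\, \calS_{\alpha^{(j)}}(\overleftarrow{X}_{\!1}),$$
recovering precisely the coefficients $b_{k,j}$ from the statement.

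Finally, I would insert this into $x_t^{K,\frakF} = \sum_{k\le K} \xi_k F_k(t)$ and swap the order of summation. Since the index set $\{(j,k) : 0\le j\le k\le K\}$ is symmetric in the two ways of being enumerated, and since all sums are finite (so no convergence issue arises), the interchange gives
$$x_t^{K,\frakF} \;=\; \sum_{k\le K} \sum_{j\le k} b_{k,j}\, \calS_{\alpha^{(j)}}(\overleftarrow{X}_{\!1})\, F_k(t) \;=\; \sum_{j\le K} \calS_{\alpha^{(j)}}(\overleftarrow{X}_{\!1}) \sum_{k=j}^{K} b_{k,j}\, F_k(t),$$
and recognizing the inner sum as $G_j(t)$ delivers \eqref{eq:sigLeg}.

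There is essentially no obstacle in this argument: once \Cref{lem:Legendre} is in hand, the proposition is a bookkeeping exercise identifying the right change of basis between monomials and normalized Legendre polynomials. The mildest subtlety is making sure the convention $a_{k,j}=0$ for $j>k$ is consistent with writing $G_j$ as a sum from $k=j$ to $K$, which follows because $q_k$ has degree exactly $k$ so monomials of degree greater than $k$ do not appear.
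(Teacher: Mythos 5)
Your proof is correct and follows essentially the same route as the paper: expand $\xi_k$ over the monomial coefficients of the shifted Legendre polynomials, apply \cref{lem:Legendre} to identify $(X,m_j)$ with $j!\,\calS_{\alpha^{(j)}}(\overleftarrow{X}_{\!1})$, and interchange the finite double sum to collect the $G_j$. Your version is in fact slightly more careful than the paper's, since you keep the normalization factor $\sqrt{2k+1}$ explicit at the first step rather than absorbing it only at the definition of $b_{k,j}$.
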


\begin{proof}
 First,  $(X,F_k) = \sum_{j\le k} a_{k,j}\, (X,m_j)$ with  $m_j(t) = t^j$.   Using \cref{lem:Legendre}, we obtain 
$$\xi_k = \sum_{j \le k} b_{k,j}  \calS_{\alpha^{(j)}}(\overleftarrow{X_1}).$$ Thus   $ X^{K,\frakF}   = \sum_{j \le K} \calS_{\alpha^{(j)}}(\overleftarrow{X_1}) \  G_j$  
with $G_j$ as in the statement.
\end{proof}
In summary, the signature elements $(\calS_{\alpha^{(k)}})_{k\le K}$ generate the $L^2([0,1])$ products of the path with the monomials$-$and in turn, with the Legendre polynomials$-$from which the projected path $X^{K,\frakF} $ becomes available. 
We can therefore retrieve $X$ by letting $K\to \infty$. Note that this reconstruction  works for multidimensional paths as well, as we can apply the procedure to each component  $i=1,\ldots,d$ with the words
$\alpha^{(i,k)} :=i 0\ldots0$, $l(\alpha^{(i,k)}) = k+2$, $k \ge 0$.  
We finish this section by computing the projection error of $\eqref{eq:sigLeg}$ when $X$ is Brownian motion. Recalling that $\kappa_X(s,t) = s \wedge t$,  a  simple calculation gives 
$$\lambda_k^{\frakF} = \E^{\Q}[\xi_k^2] = 2 \int_{0}^1 \int_{0}^t s  F_k(s)dsF_k(t)dt = 2 (2k+1)\sum_{i,j \le k} \frac{a_{k,j}\ a_{k,i}}{(j+2)(i+j+3)}.
$$
 
The first values are given by $(\lambda^{\frakF}_k)_{k=0}^3 = (\frac{1}{3},\frac{1}{10}, \frac{1}{42}, \frac{1}{90})$ from which we  conjecture that $\lambda^{\frakF}_k = \frac{1}{(2k+3)(4k-2)}$ for all $k\ge 1$. This is supported by the fact that $\lambda^{\frakF}_k$ must be  rational numbers as $a_{k,j}\in \Z$ for all $k,j$ and 
$$\sum_{k=0}^K  \lambda^{\frakF}_k = \frac{1}{3} + \frac{1}{8} \sum_{k=1}^K  \left( \frac{1}{2k-1}-\frac{1}{2k+3} \right) = \frac{1}{2} - \frac{K+1}{(2K+3)(4K+2)} \xrightarrow{K \uparrow \infty} \frac{1}{2},$$ coinciding with the total variance of Brownian motion on $[0,1]$.  
Thus, the approximation error reads  $\epsilon^{K, \frakF} =\frac{K+1}{(2K+3)(4K+2)} = \calO(\frac{1}{8K})$, which is of course larger than the Karhunen-Loève basis but smaller than the Brownian Bridge construction (\cref{ex:BBC}). Note that polynomial ONB's may well be optimal if the approximation criterion is modified. In \cite{Foster}, the authors show that in the weighted Hilbert space $L^2([0,1],\mu)$, $\mu(dt) = \frac{dt}{t(1-t)}$, the Karhunen-Loève basis of the Brownian bridge is formed by the anti-derivatives of the Legendre polynomials. Although the construction is different, it  is also curiously related to the signature elements $(\calS_{\alpha^{(k)}})$; see  Theorem 2.3 and  2.4 in \cite{Foster}.


\begin{remark}
Note that the approximation in $\eqref{eq:sigLeg}$ may be improved by adding other elements of the  signature, especially those that are \textit{nonlinear} in $X$, e.g. $\calS_{110}(X_t) = \frac{1}{2}\int_0^t x^2_s ds $.  We postpone this discussion to  \Cref{sec:sigFunc} when projecting  running functionals.
\end{remark}

 \subsection{Numerical Results}\label{sec:numResultX}

We concentrate our experiments on Brownian trajectories.  
First, we illustrate 
the  path  approximations seen earlier (Karhuhen-Loève, L\'evy-Cieselski, Signature).
Figure \ref{fig:projPath} displays the projections using $K=8$ basis elements. We naturally notice similarities between the Karhunen-Loève transform and the L\'evy-Cieselski construction with Fourier cosines, both 
obtained by superposing trigonometric functions.

\begin{figure}[H]
    \centering
    \caption{Projected paths with $K=8\,$ basis elements. }
    \vspace{-2mm}
    \includegraphics[scale = 0.43
    ]{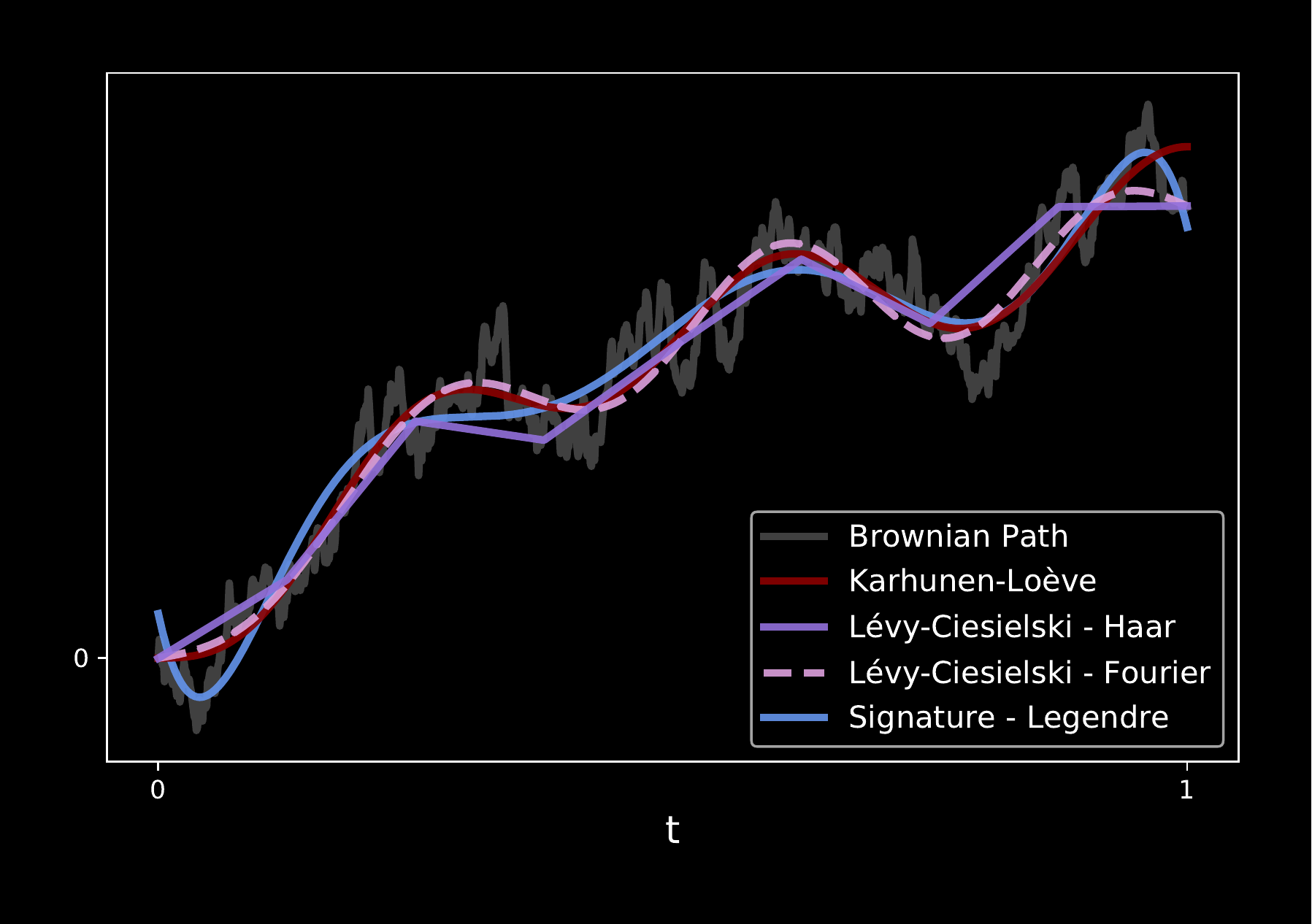}
    \label{fig:projPath}
\end{figure}

Let us gauge the accuracy of the above approximations for Brownian trajectories, in terms of (i) 
    $\epsilon^{K,\frakF}$ and (ii) variance explained 
    $ \vartheta^{K,\frakF} := \frac{\lVert X^{K,\frakF}  \rVert^2_{*}}{\left \lVert X \right \rVert^2_{*}}.$ 
To compute (i), (ii) and the coefficients $(X,F_k)_{\calH}$, we discretize the interval $[0,1]$ a regular partition made of $N = 10^4$ subintervals. 
\Cref{fig:Error_VarExp} displays the evolution of $\epsilon^{K,\frakF}$, $\vartheta^{K,\frakF}$ for $K\in \{1,\ldots,128\}$. The Karhunen-Lo\`eve expansion clearly dominates the other projections, although being asymptotically equivalent to the L\'evy-Cieselski construction with Fourier cosine basis. Besides, the $L^2(\mathbb{Q} \, \otimes \, dt)$ convergence of the Brownian bridge construction (L\'evy-Cieselski with Haar basis) is non-monotonic. Indeed, a bump appears until a full cycle of the dyadic partition is completed. 
Lastly, the slopes in the log-log convergence plot  (left chart of \Cref{fig:Error_VarExp}) 
are roughly equal to $-1$. Put differently,  the squared approximation error is of order $\calO(\frac{1}{K})$, confirming our findings from the  above examples.

\begin{figure}[H]
    \centering
    \caption{$L^2(\mathbb{Q} \, \otimes \, dt)$ error and variance explained.}
    \vspace{-2mm}
    
    \includegraphics[scale = 0.46]{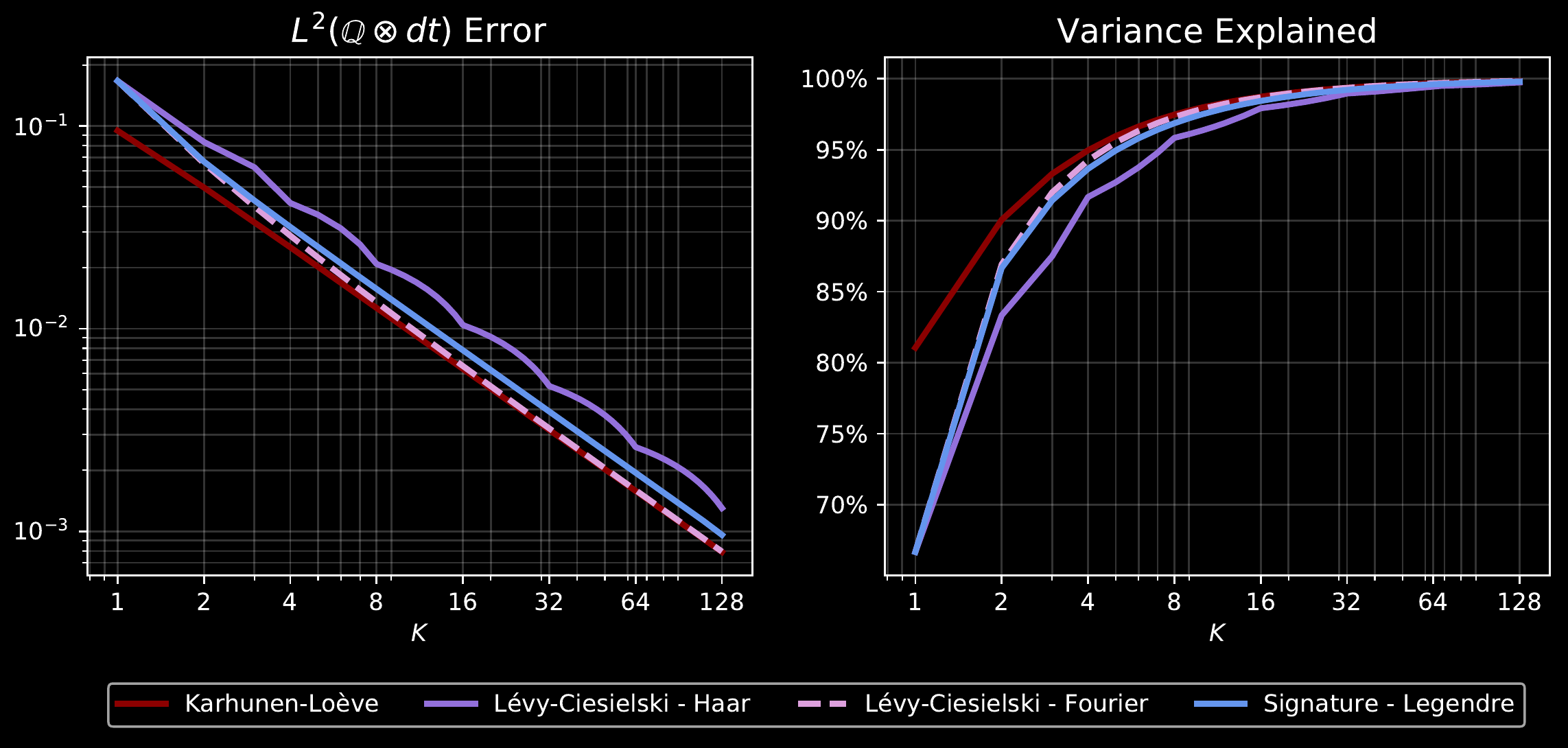}
    \label{fig:Error_VarExp}
\end{figure}

\section{Projection of Functionals}
\label{sec:funcApprox}
In  \Cref{sec:pathApprox}, we unveiled two ways to approximate exotic payoffs $\varphi = h\circ f$, namely by projecting the original path $X$ or $Y = f(X)$ directly.  If $\pi^{K,\frakF}$ denotes as before the projection map onto a Hilbert space $\calH$, we can therefore write $\varphi^{K,\frakF} := h\circ f^{K,\frakF}$, where either $f^{K,\frakF} = f \circ \pi^{K,\frakF}$ (functional of projected path)
or $f^{K,\frakF} = \pi^{K,\frakF} \circ f$ (projected functional). 
We shall see in  \Cref{ssec: numResult} that the former is suboptimal. 
Although not so problematic for functionals capturing global features of a path, local path characteristics (e.g. running maximum) will typically be grossly estimated. Indeed, projecting a path first erases most of its microstructure. 
We thus favor the second option ($f^{K,\frakF} = \pi^{K,\frakF} \circ f$), which consists of replacing $X$ by $Y$ in $\eqref{eq:proj}$. Let us now focus on $\calH = L^2([0,T])$ and demonstrate how to compute the Karhunen-Loève basis of $Y$.

\subsection{Karhunen-Loève Expansion of Functionals}

Assume that $Y \in L^2([0,T]) \cap \Lambda_T$ has  zero mean (otherwise, see \cref{rem:center}).   \cref{thm:KL} suggests to set  $\frakF$  equal to the eigenfunctions of  $\kappa_Y(s,t) = (y_s,y_t)_{L^2(\Q)}$. Optimality comes, however,  at the cost of explicitizing $\frakF$. We proceed as follows: take a regular partition $\Pi_N = \{t_n = n\, \delta t\, |\, n=0,\ldots,N\}$, $ \delta t =\frac{T}{N}$ and compute the kernel matrix $\kappa^{N}_Y = (\kappa_Y(t_n,t_m))_{0 \le n,m \le N}$. When $\kappa_Y$ does not admit a  closed-form expression, $\kappa^{N}_Y$ is replaced by the sample covariance matrix using simulated paths for $Y$. The eigenfunctions thus become eigenvectors and solve the systems\footnote{In  $\eqref{eq:eigendecomp}$,  $\sum"$ means that the first and last summand are halved, i.e. the trapezoidal rule is used to compute  $(\kappa_Y(\cdot,t), F_k)$. Another approach, known as Nyström's method  \cite{reinhardt} consists of   employing a Gaussian quadrature scheme instead.  
However, for large $N$,  we haven't observed any improvement 
and thus favor the more convenient discretization in  $\eqref{eq:eigendecomp}$.}
\vspace{-1mm}
\begin{equation}\label{eq:eigendecomp}
        \sum_{t_n\in  \Pi_N}\!\!" \, \kappa^{N}_{Y}(t_n,t_m) F_{k}(t_n)  \delta t = \lambda^{\frakF}_{k}\,  F_{k}(t_m), \quad t_m\in \Pi_N, \quad k=0,...,N.
\end{equation}
\vspace{-3mm}

 This is a simple eigenvalue problem so  all pairs $(F_k,\lambda_k^{\frakF})$ can be  computed in one go. Let us proceed with two examples where  $T=1$ and $\Q=$ Wiener measure throughout.

\begin{example} \label{ex:timeIntAvg}
 Consider  the time integral and average of a Brownian path, 
$y_t = 
f(X_t) = \int_0^t x_s\, ds, \ \bar{y}_{t}= 
\bar{f}(X_t) = \frac{1}{t}f(X_t). $
These are clearly centered processes and their covariance kernels of can be found explicitly. Starting with $Y$, 
$$\kappa_{Y}(s,t) =  \left(\int_0^s x_r dr,\int_0^t x_u du \right)_{L^2(\Q)} \overset{\textnormal{Fubini}}{=} \int_0^s\int_0^t \kappa_X(r,u) dr du.$$
A straightforward calculation gives
$\kappa_{Y}(s,t)  = \frac{s^2 t}{2} - \frac{s^3}{6}$ 
and 
$\kappa_{\overline{Y}}(s,t)   = \frac{s}{2} -\frac{s^2}{6t},  \, s\le t,$ where $\overline{Y} = \overline{f}(X)$. 
We display in  \Cref{fig:AvgK,fig:IntK} the covariance kernel (top) and first eigenfunctions (bottom) of $f$ and $\bar{f}$, respectively.
The dashed lines in the top panels  are the eigenfunctions of the original (Brownian) path. Note the wider range in the eigenfunctions $F_1,F_2$ for $\bar{f}(X)$ compared to the integrated path for small $t$.  This might come from the greater fluctuations of the time average at inception. 

\vspace{-3mm}

\end{example}

\begin{figure}[H]
\caption{Covariance kernels (top) and eigenfunctions (bottom).}
\vspace{-2mm}
\begin{subfigure}[b]{0.325\textwidth}
    \centering
    \caption{Time average}
    \includegraphics[height=1.4in,width=1.85in]{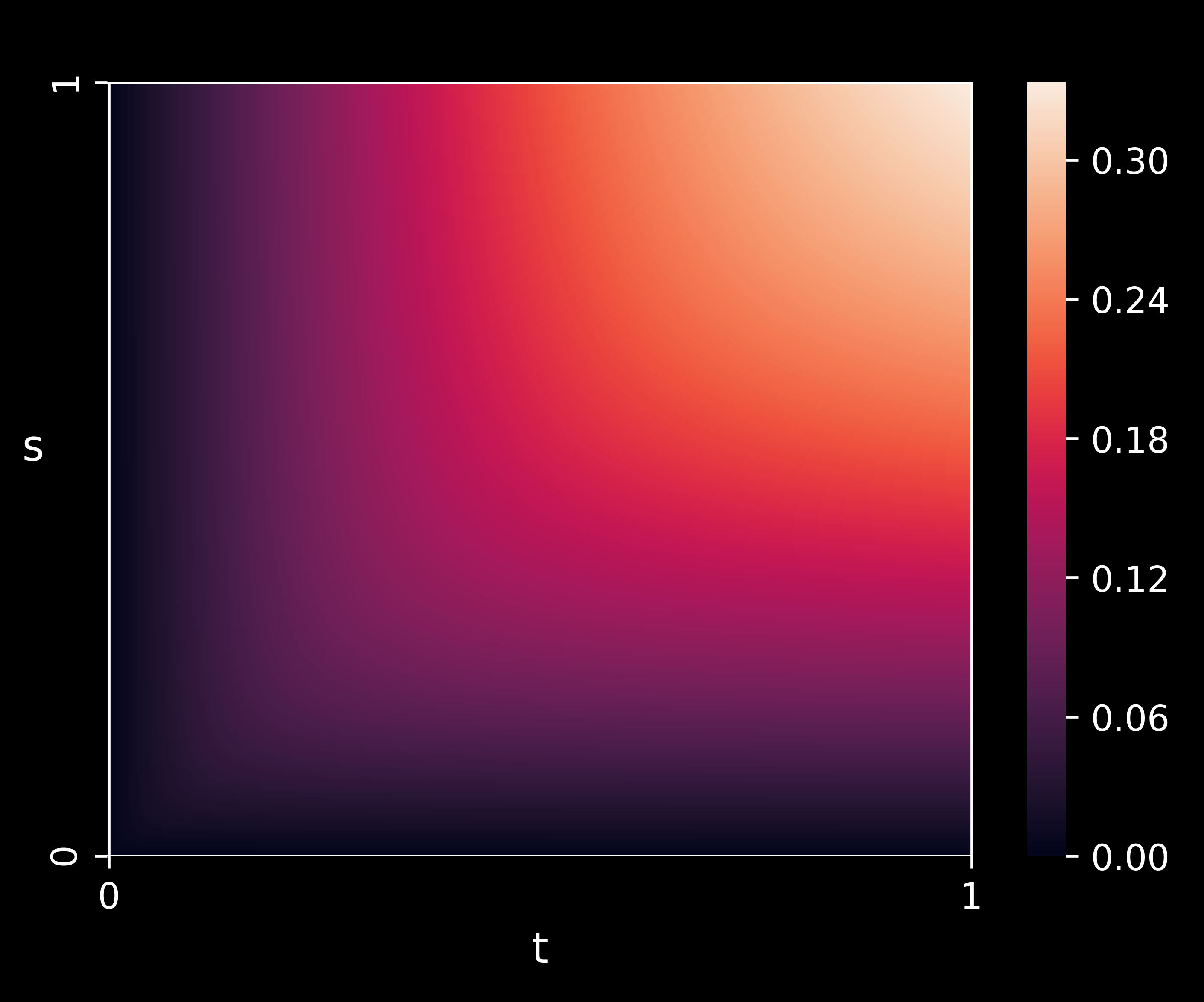}
    \label{fig:AvgK}
\end{subfigure}
\begin{subfigure}[b]{0.325\textwidth}
    \centering
    \caption{Time integral}
    \includegraphics[height=1.4in,width=1.85in]{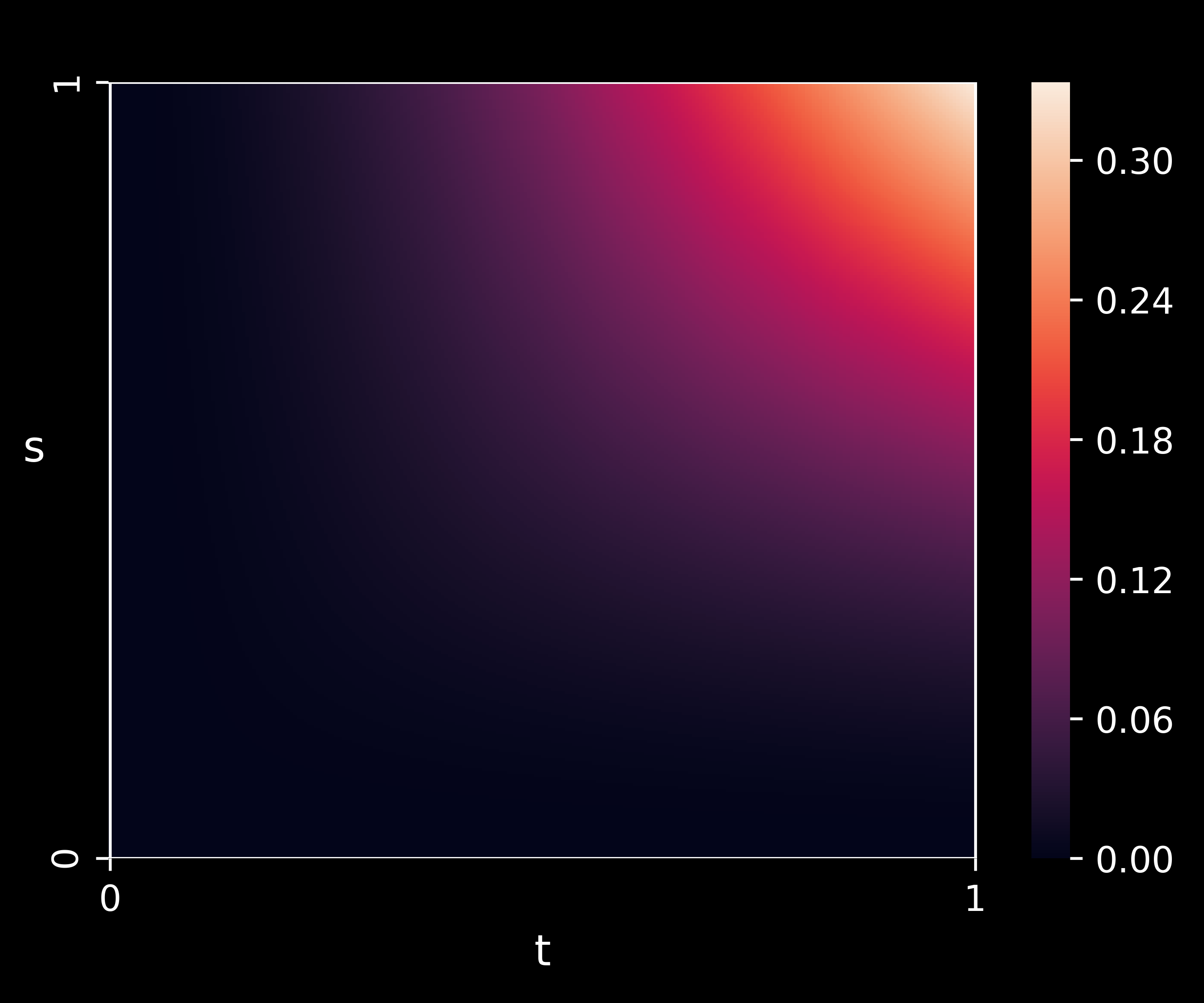}
    \label{fig:IntK}
\end{subfigure}
\begin{subfigure}[b]{0.325\textwidth}
    \centering
    \caption{Running maximum}
    \includegraphics[height=1.4in,width=1.85in]{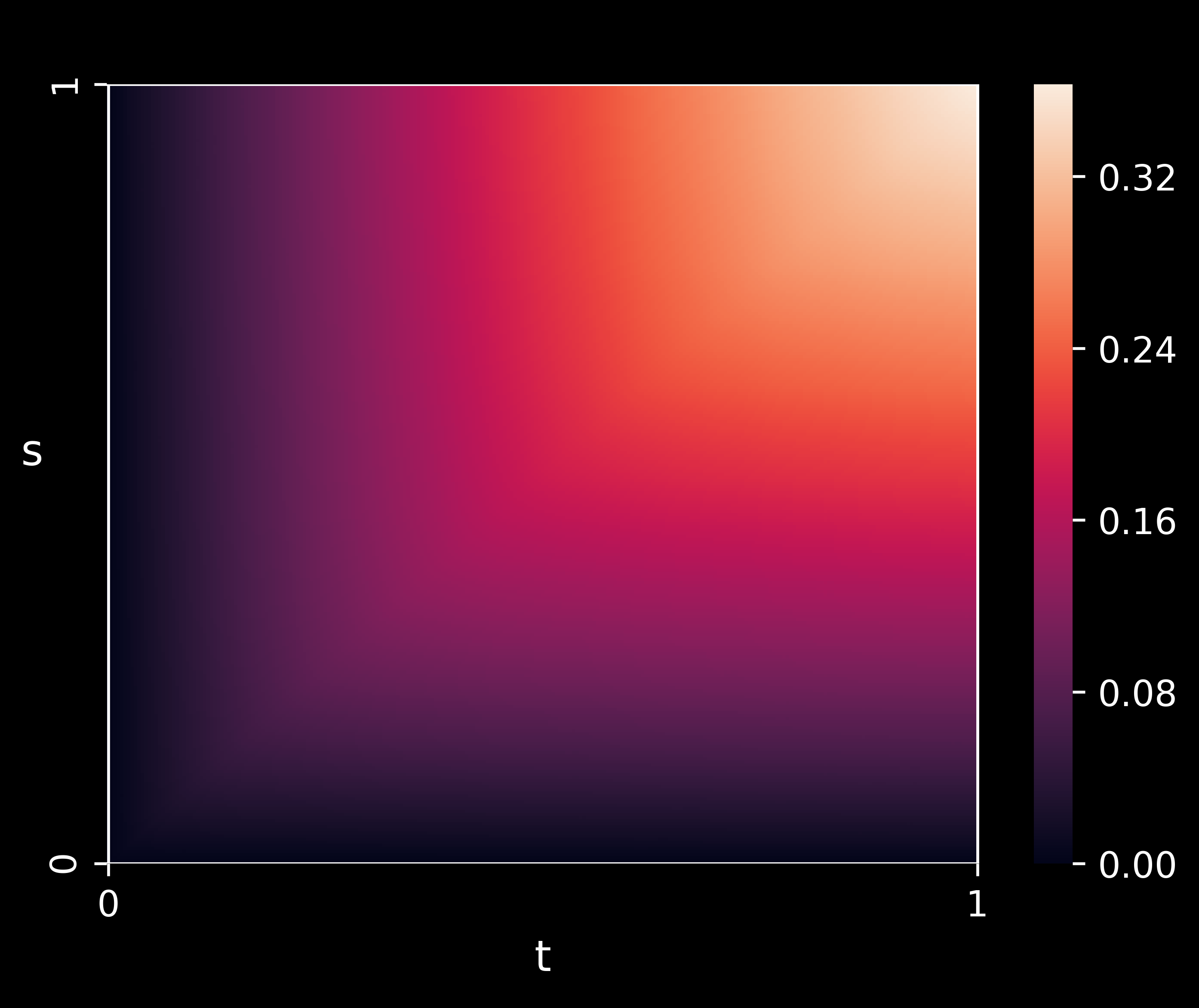}
    \label{fig:MaxK}
\end{subfigure}
\vspace{2mm}

\begin{subfigure}[b]{0.325\textwidth}
    \centering
    \includegraphics[height=1.4in,width=1.85in]{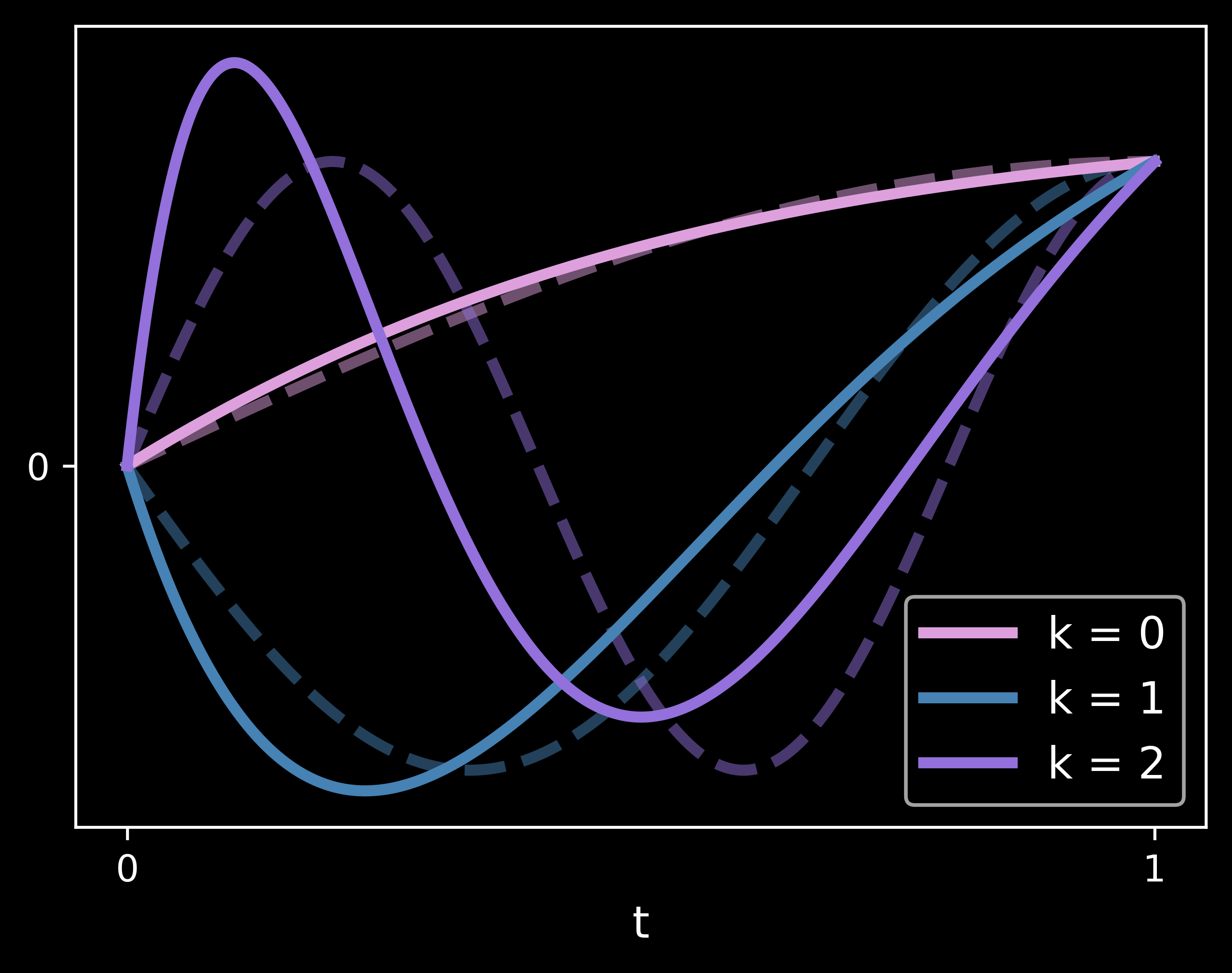}
    \label{fig:Avg}
\end{subfigure}
\begin{subfigure}[b]{0.325\textwidth}
    \centering
    \includegraphics[height=1.4in,width=1.85in]{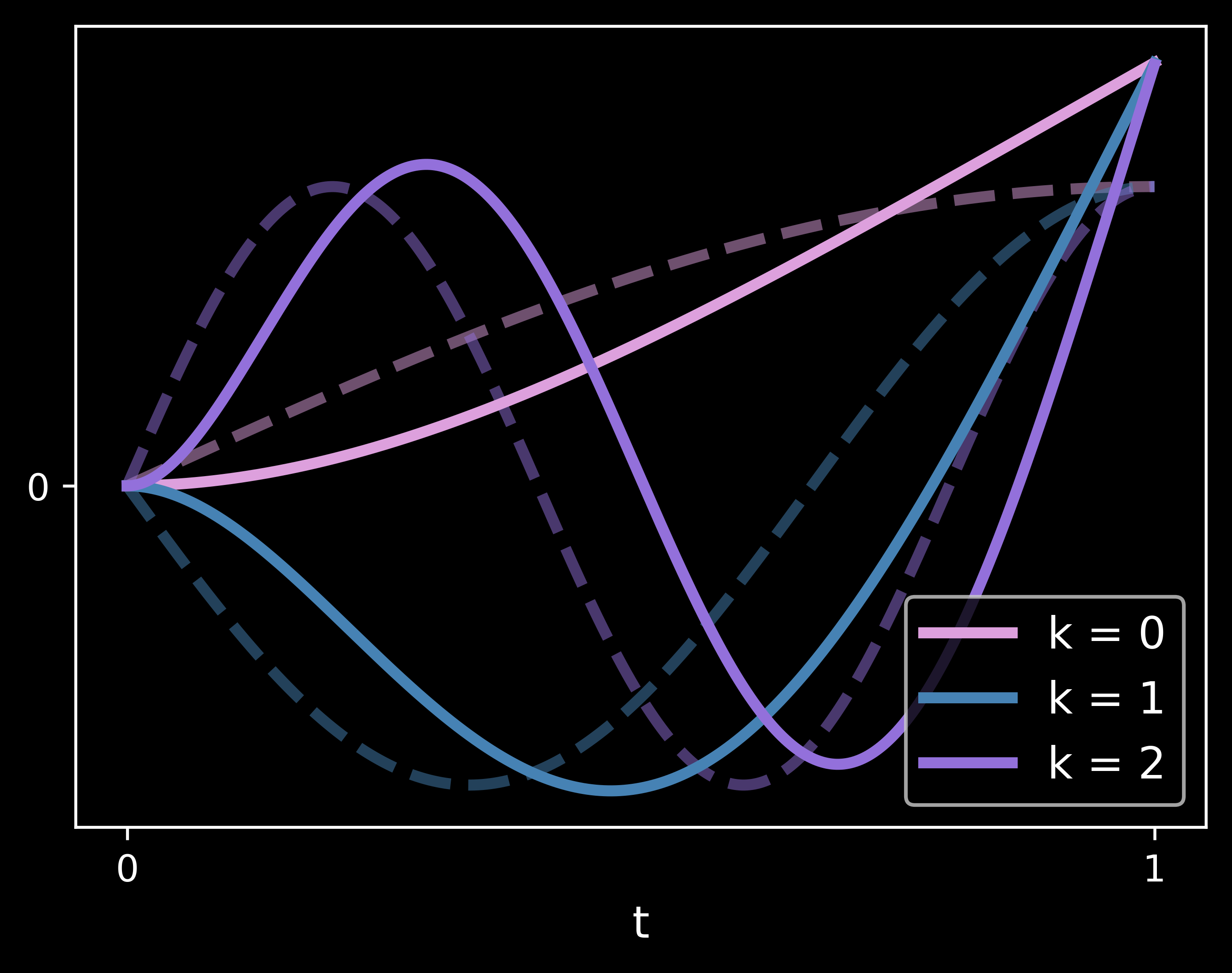}
    \label{fig:Int}
\end{subfigure}
\begin{subfigure}[b]{0.325\textwidth}
    \centering
    \includegraphics[height=1.4in,width=1.85in]{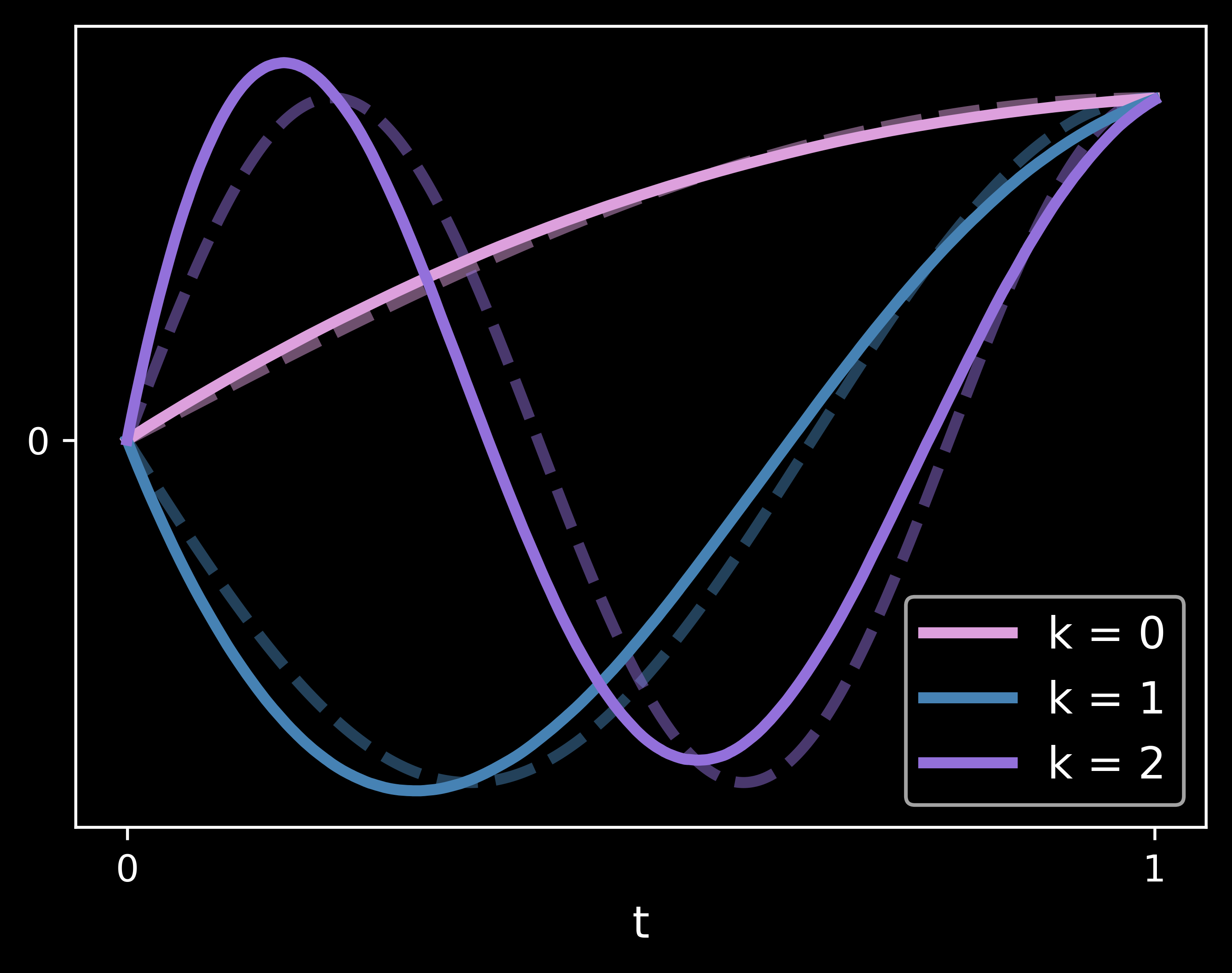}
    \label{fig:Int}
\end{subfigure}
\vspace{1mm}
\begin{center}
    \footnotesize{
    \textit{Solid lines: transformed
     path.  Dashed lines: Brownian motion.}}
\end{center}
\vspace{-3mm}
\end{figure}

\begin{example} Consider the running maximum  functional $y_t= 
f(X_t) = \max_{0 \le s \le t} x_s$.  \Cref{fig:max3D} provides an illustration in the $(t,X,Y)$ plane. The mean function is in this case non-zero and$-$using, e.g., the reflection principle$-$given by $\E^{\Q}[y_t] = \sqrt{\frac{2}{\pi} t}$. The covariance kernel admits an explicit yet complicated expression \cite{Benichou}, 
$$\kappa_Y(s,t) = \frac{s}{2} + \frac{\sqrt{s(t-s)}-2\sqrt{st} + t \arcsin(\sqrt{s/t})}{\pi}, \quad s \le t.$$
\Cref{fig:MaxK} displays the covariance kernel (top) and first eigenfunctions (bottom). The latter turns out to be quite close to the eigenfunctions of Brownian motion. 

\end{example}

\begin{figure}[H]
    \centering
    \caption{Running maximum functional for two trajectories.}
    \vspace{-2mm}
    \includegraphics[scale =0.2]{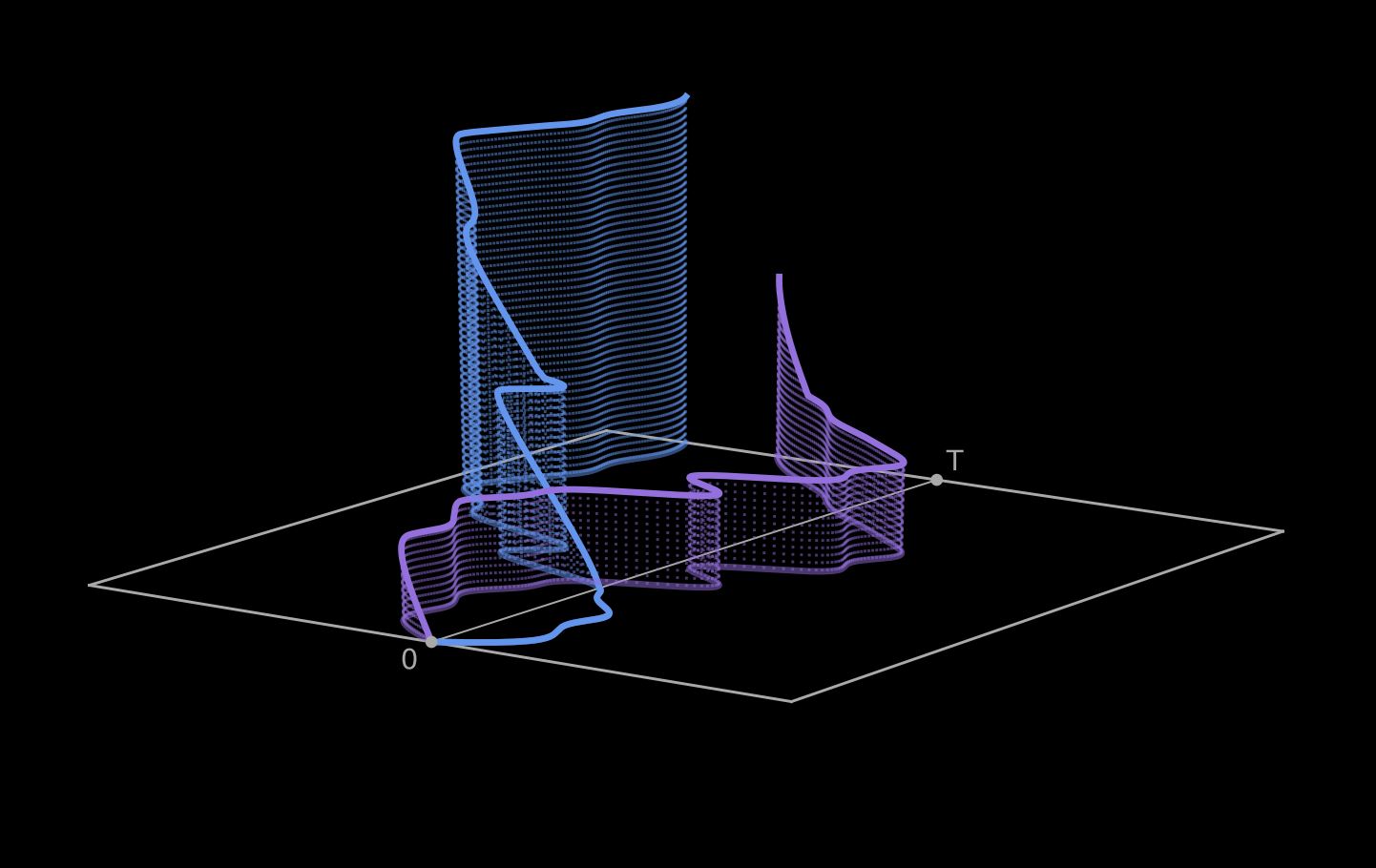}
    \label{fig:max3D}
\end{figure}
\subsection{Numerical Results} \label{ssec: numResult}
Let us compare the $L^2(\Q \otimes \, dt)$ error  
$\lVert Y^{K,\frakF} - Y \rVert^2_{*} $ in the Brownian case for the two avenues discussed at the beginning of this section. When $Y^{K,\frakF} = (f \circ \pi^{K,\frakF})(X) $, the error is calculated using Monte Carlo simulations. As in  \Cref{sec:numResultX}, we choose $T=1$, $N=10^4$ and $K\in \{1,\ldots,128\}$.  
\Cref{fig:L2Error} displays the result for the running maximum, integral and average functionals. We also add the Brownian motion itself, corresponding to the identity functional $f(X)=X$. We observe a clear improvement when projecting the transformed path. Moreover, it comes as no surprise that smooth functionals (integral, average) exhibits a faster rate of convergence than the running maximum, highly sensitive to local behaviours of a path. 

\begin{figure}[H]
    \centering
    \caption{$L^2(\Q \otimes dt)$ approximation errors. }
    \vspace{-2mm}
    \includegraphics[scale =0.38]{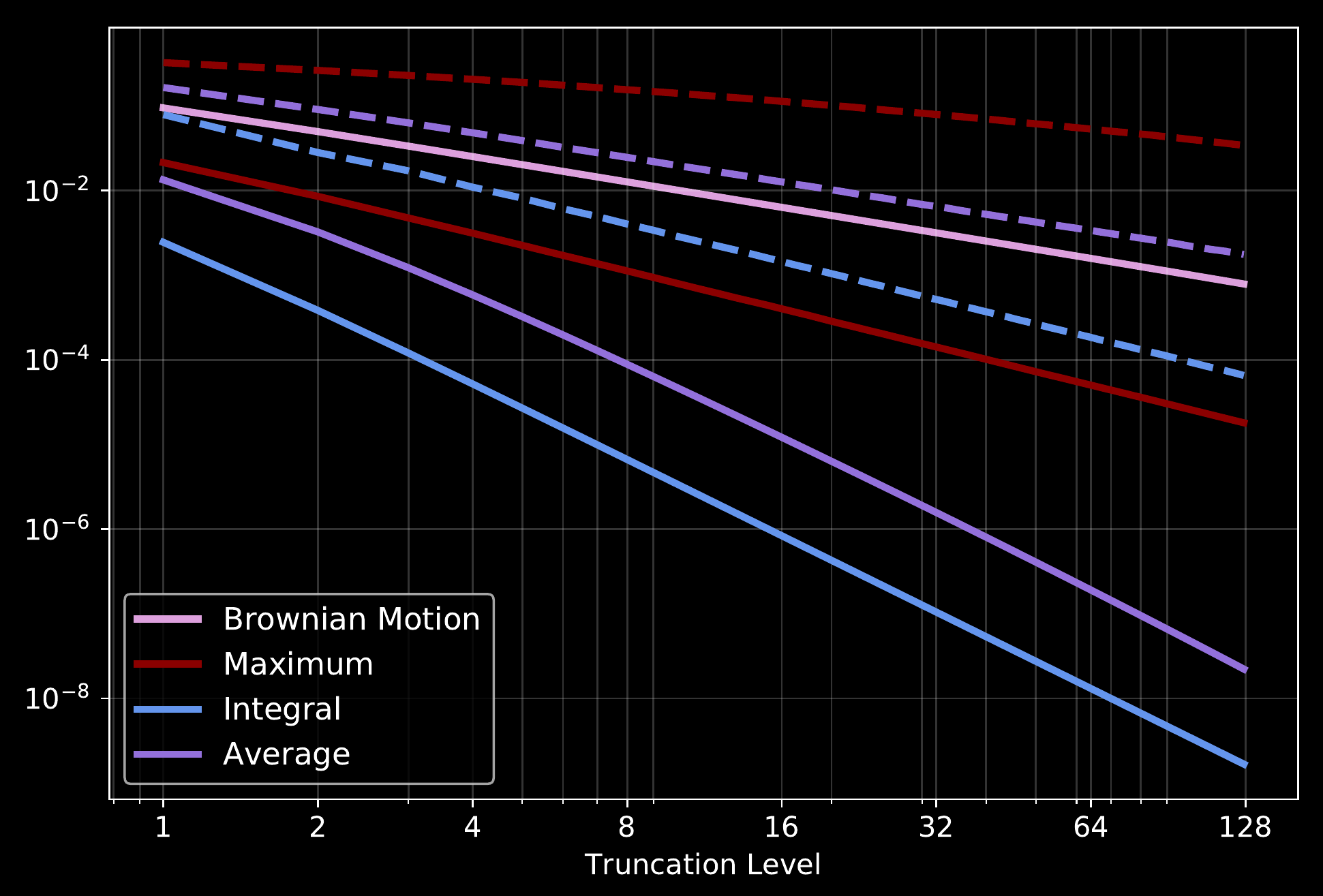}
    \label{fig:L2Error}
    \\
    
    \vspace{1mm}
    
    \footnotesize{
    \textit{ Dashed lines: functionals of projected paths. \\Solid lines: projected functionals.}}
\end{figure}




\subsection{Discussion: Projection of Functionals using the Signature} \label{sec:sigFunc}

Another approximation of $Y=f(X)$ can be obtained by combining  signature functionals in a linear fashion. For instance, one can consider all the words of length less than $\bar{K} \in \N$, giving the approximation
  \begin{equation}\label{eq:sigPayoff}
      y^{K,\calS}_t := \sum_{l(\alpha)\, \le \, \bar{K}} \xi^f_{\alpha} \calS_{\alpha}(X_t), 
  \end{equation}
  \vspace{-3mm}
  
  with $K = |\{\alpha \, | \, l(\alpha)\le \bar{K}\}|=2^{\bar{K}+1}-1.$
  The coefficients $\xi^f_{\alpha}$ may depend on  $X_0$  only and can be  calculated by either regressing  $Y$ against the signature elements or using a Taylor formula for functionals \cite{LittererOberhauser} when $f$ is smooth (in the Dupire sense) and $X$ is a diffusion. 
     In the literature,   $\eqref{eq:sigPayoff}$ is referred to  as  \textit{polynomial functional}  \cite{LittererOberhauser} or \textit{signature payoff} \cite{Szpruch,LyonsNum} in finance. 
     
     The appeal of such projection comes from the fact that polynomial functionals are dense in the space of continuous functionals restricted to  paths of bounded variation; see Theorem 5 in  \cite{LittererOberhauser}.  
     On the other hand, despite the existence of   packages to calculate the signature of discrete time paths (e.g., \texttt{iisignature} and \texttt{esig} in Python), the projection in $\eqref{eq:sigPayoff}$ is still  challenging from a computational perspective. 
     Indeed, contrary to the reconstruction in  \Cref{sec:sigLegendre}, the signature functionals have to be known at \textit{every} intermediate time. Also, as there is a priori no recipe to select words up to a given length,  we must retain all of them so the number of elements doubles every time a layer is added.

\section{Applications} \label{sec:application}
We now illustrate the benefits of the Karhunen-Loève expansion on functionals for the pricing of exotic derivatives. 
We slightly change notations and write $W$ for the coordinate process. The path $X$ now represents the stock price where for simplicity, we employ the Black-Scholes model with zero interest rate. That is,  $\Q$ is the Wiener measure and $x_t = x_0 \calE_t(\sigma  W)$, where $x_0>0$ is fixed and  $\calE$ denotes the stochastic exponential.
Notice, however, that our method applies to any dynamics of the underlying, possibly multidimensional or involving jumps. As we shall see below, what matters is whether the functional  in the payoff, say, $f$, generates square-integrable paths $Y=f(X)$ for the covariance kernel to be well-defined.\footnote{Further work would include a treatment of exotic payoffs depending on \textit{several} functionals, although the latter can be usually combined. 
Take for instance \textit{range options} that entail the difference between the running maximum $\overline{f}(X_t) =\max_{0\le s \le t}x_s$ and minimum $\underline{f}(X_t) =\min_{0\le s \le t}x_s$. Then one simply sets $f = \overline{f} - \underline{f}$. }  

Let $\calM \subseteq \R$, $\calT \subseteq [0,T]$ be a finite set of option parameters and maturities, respectively. 
We seek to approximate the price surface
$p: \calM \times \calT \to \R$,  $p(m,\tau) = \E^{\Q}[ \varphi_m(X_\tau)]$, where the payoffs  $\varphi_m(X_{\tau}) = (h_m \circ  f)(X_{\tau})$ depends on a parameter $m \in \calM$.  For example, a call option on $Y=f(X)$  is obtained with $h^{\text{Call}}_m(y) := (y-m x_0)^{+}$ and $m$ is the \textit{moneyness} of the option.

The standard Monte Carlo  approach (MC) consists of simulating the underlying path  on a partition  $\Pi_{N} = \{0=t_0 < t_1 < ... < t_N=T \}$  that contains $\calT$ and compute the price as $p^{N,J}(m,\tau) = \frac{1}{J}\sum_{j=1}^J \varphi_m(X^{N,j}_\tau)$, $J\in \N$. 
In contrast, the \textit{Karhunen-Loève Monte Carlo method} (KLMC) samples $Y=f(X)$ directly and computes the price surface  using  the representation $p(m,\tau) = \E^{\Q}[ h_m(y_\tau)]$.  
We now describe the method in more depth. 
\subsection{The KLMC Algorithm}
We assume for simplicity that $Y$ has zero mean, otherwise minor changes  must be made for the KL expansion; see \cref{rem:center}. 
First, we simulate trajectories
$Y^j=(f(X_{t_n}^j))_{t_n \in \Pi_{N_{\text{off}}}}$, $j=1,...,J_{\text{off}}$ with  $J_{\text{off}}, N_{\text{off}} \in \N$,  and compute the eigenfunctions of $\kappa^{N_{\text{off}}}_Y$ as in  $\eqref{eq:eigendecomp}$. Next, for $k=1,...,K$,   we estimate the sample quantile function $\Phi_k^{-1}:[0,1]\to \R$ of $\xi_k = (Y,F_k)$ by employing method N\textsuperscript{o} $\! 7$ in \cite{HyndmanFan}.  
The coefficients $(\xi_k )$ can thereafter be simulated using inverse transform sampling \cite{Devroye}. 
  Notice that these steps can be done in an offline phase so $(\Phi_k^{-1})$ can be reused for other options contingent upon the functional $f$; see \Cref{sec:numResultX}. 
  
In the online phase, we simulate $J \in \N$ transformed paths using inverse transform sampling for $\xi_k$.   
Finally, the price surface is calculated using Monte Carlo.  The procedure is summarized in \Cref{alg:klmc}. 
It should be noted that although the coefficients $(\xi_k)$ are orthogonal in $L^2(\Q)$, they may well be \textit{dependent} when $Y$ is non-Gaussian. While the marginals of $(\xi_k)_{k\le K}$ are fitted properly, 
the dependence is omitted as generating dependent random vectors with unknown joint distribution is highly non-trivial. Nevertheless, this simplification doesn't induce a bias in the obtained prices as we shall see in the numerical experiments.  

\begin{algorithm}[H]
\caption{(KLMC) }\label{alg:klmc}
\begin{itemize}
\vspace{-2mm}
\item \textbf{Offline}: Given $f$, $K$,  $J_{\text{off}}$, $N_{\text{off}}$ 
\begin{enumerate}
\setlength \itemsep{0.2ex}
\vspace{-2mm}
\item Simulate trajectories $Y^j=(f(X_{t_n}^j))_{t_n \in \Pi_{N_{\text{off}}}}$, $j=1,...,J_{\text{off}}$
\item Compute $\kappa^{N_{\text{off}}}_Y$ (closed-form or from the sample  $(Y^j)$)
\item Solve the eigenvalue problem $\eqref{eq:eigendecomp}$ to obtain  $(\lambda^{\frakF}_k,F_k)$
\item Using $(Y^j)$, estimate the quantile functions  $\Phi_k^{-1}$, $k\le K$ 
\end{enumerate}
\vspace{-2mm}

\item \textbf{Online}: Given $J, \ \calM, \ \calT$ 
\begin{enumerate}
\setlength \itemsep{0.2ex}
\vspace{-2mm}
\item Simulate $\xi^j_k = \Phi^{-1}_k(u_k^j)$, $(u_k^j) \overset{i.i.d.}{\sim} U(0,1)$, $j \le J$, $k\le K$
\item Compute $y_\tau^{K,\frakF,j} =\sum_{k=1}^K  \xi^j_k    F_k(\tau)$, $\tau \in \calT$, $j\le J$ 
\item Estimate the price surface $p^{K,\frakF,J}(m,\tau) := \frac{1}{J}\sum_{j=1}^{J} h_m(y_\tau^{K,\frakF,j}).$
\end{enumerate}
\end{itemize}
\vspace{-3mm}
\end{algorithm}

\subsection{Numerical Results} \label{sec:numResultPrice}
First, we build the price surface for  Asian and lookback call options, i.e. by choosing $h_m = h^{\text{Call}}_m$ and the running maximum and time average as underlying functional, respectively. Of course, the put option price surface can  be retrieved thanks to  put-call parity. We also consider Up \& Out digital options, that is $f(X_t)=\max_{0\le s \le t}x_s$  and   $h^{\text{UO}}_m(y):=\mathds{1}_{\{y \ \le \ m x_0\}}$. The parameter $m\ge 1$ thus represents the barrier of the option relative to the spot price. 
We can therefore reuse the quantile functions computed for the lookback call options.

The parameters are $(x_0,\sigma,N_{\text{off}},J_{\text{off}},J) = (100, 0.2, 10^3, 2^{17}, 2^{19})$, 
 $T=1$ year and $\calT =  \{\frac{1}{52},\frac{2}{52},\ldots,1\}$ (weekly maturities). 
The moneyness and barrier levels are respectively $\calM^{\text{Call}} = \{0.75,0.80,\ldots,1.25\}$ and $\calM^{\text{UO}} = \{1.05,1.10,\ldots,1.50\}$. 
We assess  accuracy  in the mean square sense, namely by computing 
$$
\text{MSE} =  \frac{1}{|\calM|\ |\calT|}\sum_{(m,\tau) \in \calM \times \calT} |p^{\text{(B)}}(m,\tau) - \hat{p}(m,\tau)|^2. 
$$ The function $\hat{p}$ is the approximated price and $p^{\text{(B)}}$ a benchmark obtained using a standard Monte Carlo with $40 \cdot |\calT| = 2080$ time steps and same number of simulations. 
\interfootnotelinepenalty=10000 
  \cref{tab:results} displays the MSE, runtime (online phase) and number of variates per simulated path ($K$ and $N$ for the KLMC and MC method, respectively).\footnote{The experiments have been made on a personal computer; see \href{https://github.com/valentintissot/KLMC.git}{https://github.com/valentintissot/KLMC} for an implementation. The offline phase takes about 10 seconds per  functional.}  Notice that we increase $K,N$  for the lookback call and Up \& Out digital option as the running maximum has a slower rate of $L^2(\Q\otimes dt)$ convergence  as seen in \Cref{fig:L2Error} for the Brownian case.  The KLMC method constantly yields a lower MSE and runtime.  For the Asian call option, note that the number of variates per path is less than the number of maturity points and KLMC method, which couldn't be done with the MC method. 
  
  \vspace{-1mm}

\begin{table}[H]
    \centering
\caption{Mean squared errors and runtime (seconds)}
\vspace{-3mm}
\begin{tabular}{ccccccc} 
\hline  & \multicolumn{3}{c}{KLMC}  & \multicolumn{3}{c}{MC} \\  \hline
Option & K & MSE  & Time & N & MSE & Time \\
\hline  
 Asian Call & $40$ & 1.50e-04 & 2.26 & $ 52$ & 3.10e-04 & 2.40\\ 
Lookback Call   & $100$ & 1.15e-02 & 4.47& $4\cdot 52$ & 1.92e-01 & 7.05 \\ 
Up \& Out Digital   & $100$ & 1.80e-04 & 4.40 & $4\cdot 52$ & 1.90e-04 & 6.88\\ \hline
\end{tabular}
\label{tab:results}
\end{table}

\section*{Conclusion}
This paper sheds further light on the approximation of path functionals. After a thorough review of Hilbert projections and a connection with the path signature, we show the power of the Karhunen-Loève expansion to parsimoniously simulate path-dependent payoffs. 
Further work would include the use of copulas in the  KLMC algorithm to capture the dependence between the $L^2([0,T])$ coefficients of the Karhunen-Loève expansion and a performance comparison with signature-based methods. 

\bibliography{main.bib}
\end{document}